\newtheorem{definition}{Definition}
\newtheorem{proposition}{Proposition}
\newtheorem{theorem}{Theorem}
\begin{document}

\title{Graph Permutation Entropy: Extensions to the Continuous Case, A step towards Ordinal Deep Learning, and More}

\author{
\IEEEauthorblockN{Om Roy\IEEEauthorrefmark{1},\IEEEauthorrefmark{2} Avalon Campbell-Cousins\IEEEauthorrefmark{1}, John Stewart Fabila Carrasco\IEEEauthorrefmark{1},\IEEEauthorrefmark{3} Mario A Parra\IEEEauthorrefmark{4}, Javier Escudero\IEEEauthorrefmark{1}\thanks{Corresponding author: Om Roy (email: o.roy.2022@uni.strath.ac.uk)}
}
\\
\bigskip
\IEEEauthorrefmark{1} School of Engineering, Institute for Imaging, Data and Communications, University of Edinburgh, Edinburgh, Scotland, United Kingdom

\IEEEauthorrefmark{2} Computer and Information Sciences, University of Strathclyde, Glasgow, Scotland, United Kingdom

\IEEEauthorrefmark{3} School of informatics, University of
Edinburgh, Edinburgh, Scotland, United Kingdom

\IEEEauthorrefmark{4} Psychological Sciences and Health, University of Strathclyde, Glasgow, Scotland,United Kingdom

}

% The paper headers

% The only time the second header will appear is for the odd numbered pages
% after the title page when using the twoside option.
% 
% *** Note that you probably will NOT want to include the author's ***
% *** name in the headers of peer review papers.                   ***
% You can use \ifCLASSOPTIONpeerreview for conditional compilation here if
% you desire.

% If you want to put a publisher's ID mark on the page you can do it like
% this:
%\IEEEpubid{0000--0000/00\$00.00~\copyright~2015 IEEE}
% Remember, if you use this you must call \IEEEpubidadjcol in the second
% column for its text to clear the IEEEpubid mark.

% use for special paper notices
%\IEEEspecialpapernotice{(Invited Paper)}

% make the title area
\maketitle

% As a general rule, do not put math, special symbols or citations
% in the abstract or keywords.
\begin{abstract}
Nonlinear dynamics play an important role in the analysis of signals. A popular, readily interpretable nonlinear measure is Permutation Entropy. It has recently been extended for the analysis of graph signals, thus providing a framework for non-linear analysis of data sampled on irregular domains. Here, we introduce a continuous version of Permutation Entropy, extend it to the graph domain, and develop a ordinal activation function akin to the one of neural networks. This is a step towards Ordinal Deep Learning, a potentially effective and very recently posited concept. We also formally extend ordinal contrasts to the graph domain. Continuous versions of ordinal contrasts of length 3 are also introduced and their advantage is shown in experiments. We also integrate specific contrasts for the analysis of images and show that it generalizes well to the graph domain allowing a representation of images, represented as graph signals, in a plane similar to the entropy-complexity one. Applications to synthetic data, including fractal patterns and popular non-linear maps, and real-life MRI data show the validity of these novel extensions and potential benefits over the state of the art. By extending very recent concepts related to permutation entropy to the graph domain, we expect to accelerate the development of more graph-based entropy methods  to enable nonlinear analysis of a broader kind of data and establishing relationships with emerging ideas in data science. 

\end{abstract}

\textbf{
Understanding complex data requires advanced methods, especially for signals from irregular domains like graphs. This research introduces a continuous version of Permutation Entropy, adapting it for graph signals to enable nonlinear analysis. By developing an ordinal activation function and extending ordinal contrasts to graphs, the study advances the emerging field of Ordinal Deep Learning. Experiments on synthetic and real-life data across multiple disciplines demonstrate the effectiveness of these methods, promising enhanced data analysis and interpretability.
}

\IEEEpeerreviewmaketitle

\section{Introduction}

\IEEEPARstart{G}{raph} permutation entropy~\cite{Carrasco22} has recently extended the well studied permutation entropy (PE)~\cite{Bandt08} to graph signals. Given the rapid development of sensor measurement equipment and methods in recent years, graph signals have grown in popularity. By using the graph formalisation to consider spatial dependencies in data, it is possible to improve the analysis of various real-life applications, such as weather measurements and neuro-imaging data -- e.g., electroencephalogram (EEG) and MRI signals \cite{john24,Roy24}. This includes the analysis of signals not bounded by time such as images and fractal surfaces which can be improved by considering the spatial structure of such data \cite{Wang23}. 

 While classical, uni-variate PE \cite{Bandt08} considers sequential samples of time series when computing the ordinal patterns, permutation entropy for graph signals($PE_G$) extends this concept and considers $k$-step node neighbourhood sample averages \cite{Carrasco22}. In addition to allowing the extension of PE to graph signals, $PE_G$ allows the direct representation of classical PE as a special case of $PE_G$ by using an appropriate underlying graph. That is, $PE_G$ reduces to PE when considering that the time series has been sampled on a directed path graph  \cite{Carrasco22}. In addition, we can manipulate the underlying graph and consider topologies that best represent previously under-utilized dependencies in the signals, allowing for a truly multi-variate version of permutation entropy \cite{Carrasco22,mvperm}.\footnote{Of note, while other literature exists on `graph entropy', those methods use spectral properties of graph shift operators such as the Graph Laplacian or Adjacency Matrix Eigenvalues \cite{paeng22}. While informative, the information gained from such approaches is completely determined by the topology of the graph and \textit{not} the graph signals themselves.} Overall, the relevance of the graph-based entropy framework is also illustrated by the very recent implementation of bubble entropy for graph signals \cite{Dong24}, which used the aforementioned concept of $k$-step node neighbourhood sample averages, and the even more recent definition of other versions of PE building on nearest neighbours graphs \cite{knn}.

Permutation Entropy (PE) \cite{Bandt08}, while intuitive, robust and easy to understand, has some drawbacks. One of them is the lack of consideration of amplitude values in the computation of patterns and the handling of ties or equal values. While studies have addressed these limitations in uni-variate PE, this has yet to be considered in the graph domain. Zanin~\cite{Zanin23} in particular introduced continuous ordinal patterns and demonstrated the link between ordinal analysis and deep learning. This work generalized the ordinal pattern approach (i.e., the permutation patterns that form PE) by representing time series in terms of their similarity to a given representative pattern. This considered the amplitude values in the samples and allowed for the optimization of the ordinal pattern used to best represent the data under study, akin to the optimization procedure of traditional Machine Learning or Deep Learning methods.

Here, we extend the idea behind this methodology to the computation of a continuous version of $PE_G$. Our approach adapts a more natural method of describing continuous ordinal patterns and uses their mutual exclusivity as a way to compute a joint probability distribution. In particular, we introduce a ordinal activation function. This is any continuous and strictly monotonically increasing function that maps embedding vectors to a continuous domain while maintaining the ordinal structure of the patterns. This allows us to exploit certain amplitude-related features in windows of a given length depending on the choice of activation function. We show this is an improvement on the discrete case in real-life and synthetic data.

The literature has recently paid attention to pattern contrasts \cite{Bandt23} as a  method to analyse uni-variate time series. Namely, Bandt~\cite{Bandt23} introduces an orthogonal system of ordinal pattern contrasts which results from a linear combination of pattern frequencies. The statistical independence of these contrasts provides different types of information that can be used to detect certain signal properties. In particular, one of them, the turning rate contrast, was shown to be able to evaluate sleep depth directly from EEG data \cite{Bandt23}.

Here, we formally extend these contrasts to the graph domain while also introducing an amplitude-aware continuous version of these contrasts. We show how our continuous method improves on the traditional discrete version of the contrasts and how it generalizes effectively to graph signals.

The pattern contrasts are currently restrained to neighbourhoods of size 3, i.e., $6=3!$ possible values. This is due to increases in statistical complexity and computational costs of computing patterns of longer length \cite{Bona23}. However, the ideas behind these computations remain similar for longer length patterns. Bandt et al.~\cite{Bandt22} recently introduced two new ordinal patterns for the analysis of images. They took patterns of length 4 (24 possible patterns) where 3 `types' are computed based on symmetrical properties and statistical distributions of the parameters. Introducing an orthogonal coordinate system or basis for the analysis of images in a plane similar to the entropy-complexity plane.

Here, we extend these ideas to the graph domain and show that these processes integrate seamlessly and provide promising results in the graph signal analysis of images, in particular iso-tropic images which are shown to be rotation invariant \cite{Rib12}. This result in particular could be useful in machine learning computer vision applications where the ability to detect an image regardless of rotation is an important consideration and translates directly to the out-of distribution detection problem. We also show how this method translates to the analysis of fractal surfaces, a class of images that can be generated by a stochastic algorithm \cite{Wang23}.

\section{Methods}

\subsection{Background and Motivation}

We will first introduce some preliminary notation and information regarding Graph Signals and Graph Permutation Entropy.

A  graph \( G \) is defined as the triple \( G = (V, E, \textbf{A}) \) which consists of a finite set of vertices or nodes \( \textbf{v} = \{1, 2, 3, \ldots, N\} \), an edge set \( E \subset \{(i, j) : i, j \in V\} \), and \textbf{A} is the corresponding \( N \times N \) adjacency matrix with entries \(  \textbf{A}_{ij} \)=1 if \( (i, j) \in E \), and \( 0 \) otherwise.

A graph signal is a real function defined on the vertices, i.e., \( \textbf{x}: V \rightarrow \mathbb{R} \). The graph signal \( \textbf{x} \) can be represented as an \( N \)-dimensional column vector, \( \textbf{x} = [x_1, x_2, \ldots , x_N]^{\text{T}} \in \mathbb{R}^N \) (with the same indexing of the vertices).

Given a graph \( G = (V, E, \textbf{A}) \), we define a function on the vertices \( \text{deg}_k : V \rightarrow \mathbb{R} \) given by
\begin{equation}
\text{deg}_k(i) := \sum_{j \in V} (\textbf{A}^k)_{ij}
\end{equation}

for \( k = 1 \), denote \( \text{deg}_1(i) \) as \( \text{deg}(i) \), where \( \text{deg}(i) \) represents the degree of vertex \( i \), i.e., the number of edges incident to it.

Given a vertex \( i \), we define \( N_k(i) \) as the set of all vertices connected to the vertex \( i \) with a walk on \( k \) edges, i.e.,
\[
N_k(i) := \{ j \in \textbf{v} \mid \text{a walk exists joining } i \text{ and } j \text{ on } k \text{ edges} \}.
\]

The normalized Laplacian is defined using the adjacency matrix as follows:
\[
\Delta := \textbf{I} -\textbf{D} ^{-\frac{1}{2}} \textbf{A} \textbf{D}^{-\frac{1}{2}} ,
\]
where $\textbf{D}$ is the degree matrix, i.e., a diagonal matrix given by \( \textbf{D}_{ii} = \text{deg}(i) \).

Some important things to note is that the $k^{\text{th}}$ power of the adjacency matrix \textbf{A} that encodes the entire graph topology provides the frequency of the number of $k$-walks between two vertices. i.e., when $k=1$, it counts the number of edges between node $i$ and $j$.

To construct the embedding vector that is analogous to the permutation patterns in traditional PE for $m=2$, for each \( i \in V \), we define the pair \( y_i \) where its first component is the value of the signal $\textbf{x}$ at node \( i \) and the second component is the average of the signal $\textbf{x}$ on the neighbours of \( i \),
\[
\textbf{y}_i := (x_i, (\textbf{I} - \Delta) x_i) = \left( x_i,\textbf{D} ^{-\frac{1}{2}}\textbf{A}\textbf{D}^{-\frac{1}{2}} x_i \right) .
\]

The \( N \) pairs are ordered according to their relative values. This is easily extended for increasing values of $m$ by changing the power of the adjacency matrix in the construction of $(\textbf{I} - \Delta)$. This yields a distinct pattern. As an example, when $m=3$, we get the pattern `123' where the signal value of the nodes is less than the average of the signal values for its 1 step and 2 step neighbors. We are now ready to define  Permutation Entropy  for Graph Signals ($PE_G$).

%\vspace{10pt} 

\subsection{Permutation Entropy for Graph Signals}

\begin{definition}[Permutation Entropy for Graph Signals]

%\vspace{5pt} 

Let $G = (V, E, \textbf{A})$ be a graph and $\textbf{x} = \{x_i\}_{i=1}^{N}$ be a signal on the graph. The permutation entropy for graph signals, $PE_G$, is defined as follows \cite{Carrasco22}:

%\vspace{5pt} 

\begin{enumerate}
    \item For  $2 \leq m \in \mathbb{N}$ the embedding dimension, $L \in \mathbb{N}$ the delay time, and for all $i = 1, 2, \dots, n$, we define
    \begin{equation}
\textbf{y}^{kL}_i = \frac{1}{|N^{kL}(i)|} \sum_{j \in N^{kL}(i)} x_j 
= \frac{1}{|N^{kL}(i)|} (\textbf{A}^{kL}\textbf{x})_i,
\end{equation}
    where $N^k(i)$ is defined as the set of vertices \( j \in V \) such that there exists a walk on \( k \) edges joining \( i \) and \( j \).
    
    \item The embedding vector is then given by
    \begin{equation}
    y^{m,L}_i = \left(y^{kL}_i\right)_{k=0}^{m-1} = \left(y^0_i, y^L_i, \dots, y^{(m-1)L}_i\right).
    \end{equation}
       % \vspace{5pt}

        \item The embedding vector $\textbf{y}^{m,L}_i$ is sorted to be in increasing order.
    %    \vspace{5pt}

    \item The relative frequency for the distinct permutations $\pi_1, \pi_2, \dots, \pi_k$, where $k=m!$, is denoted by $p(\pi_1), p(\pi_2), \dots, p(\pi_k)$. The permutation entropy for the graph signal $\textbf{x}$ is then computed by using the normalized Shannon entropy:
       % \vspace{5pt}

    \begin{equation}
    \text{$PE_G$}(m, L) = - \frac{1}{\ln m!} \sum_{i=1}^{k} p(\pi_i) \ln p(\pi_i).
    \label{$PE_G$}
\end{equation}

\end{enumerate}
\end{definition}
   % \vspace{5pt}

    %\vspace{5pt}

\subsection{Ordinal Contrasts for Graph Signals}

    \vspace{5pt}

Given the embedding vector when $m=3$:

\begin{equation}
    \textbf{y}^{L}_i = \left(y^{kL}_i\right)_{k=0}^{2} = \left(y^0_i, y^L_i,y^{2L}_i,\right),
    \end{equation}
ordinal Contrasts for $m=3$, introduced by Brandt in \cite{Bandt23} for uni-variate time series, naturally extend to the graph signals.
The values are sorted to give $3!=6$ distinct patterns.  By considering the fact that the sum of all pattern frequencies must be equal to one, we get a joint probability distribution spanned by the individual pattern distributions. Also, a key difference to the uni-variate case is the fact that we can now assign a pattern to \textit{each} node. This granularity allows us a level of precision previously unattainable.
We define the permutation contrasts with the graph signal properties they aim to emphasize as in Table \ref{Tab:Def}.

\begin{table}[H]
    \centering
 
    \caption{Definition of Ordinal Contrasts}
    \begin{tabularx}{\linewidth}{|c|X|X|}
    \hline
    Contrast & Equation & Property \\
    \hline
    $\alpha$ & $\alpha = p(132) + p(213) + p(231) + p(312)$ & Turning rate \\
    $\beta$ & $\beta = p(123) - p(321)$ & Up-down balance \\
    $\tau$ & $\tau = p(123) + p(321) - \frac{1}{3}$ & Persistence \\
    $\gamma$ & $\gamma = p(213) + p(231) - p(132) - p(312)$ & Rotational symmetry \\
    $\delta$ & $\delta = p(132) + p(213) - p(231) - p(312)$ & Up-down scaling \\
    \hline
    \end{tabularx}
       \label{Tab:Def}
\end{table}

While it is unusual for ordinal patterns to have representative meanings, `contrasts' (analogous to the concept of variance in traditional statistics) \cite{Bandt23} emphasize certain graph signal properties providing discriminate information. Furthermore, our graph version of permutation contrasts, using its node neighborhood method, considers both the data and the graph topology which adds additional value to the information provided by the contrasts.

\begin{figure}[H]
    \centering
    \includegraphics[width=0.4\textwidth]{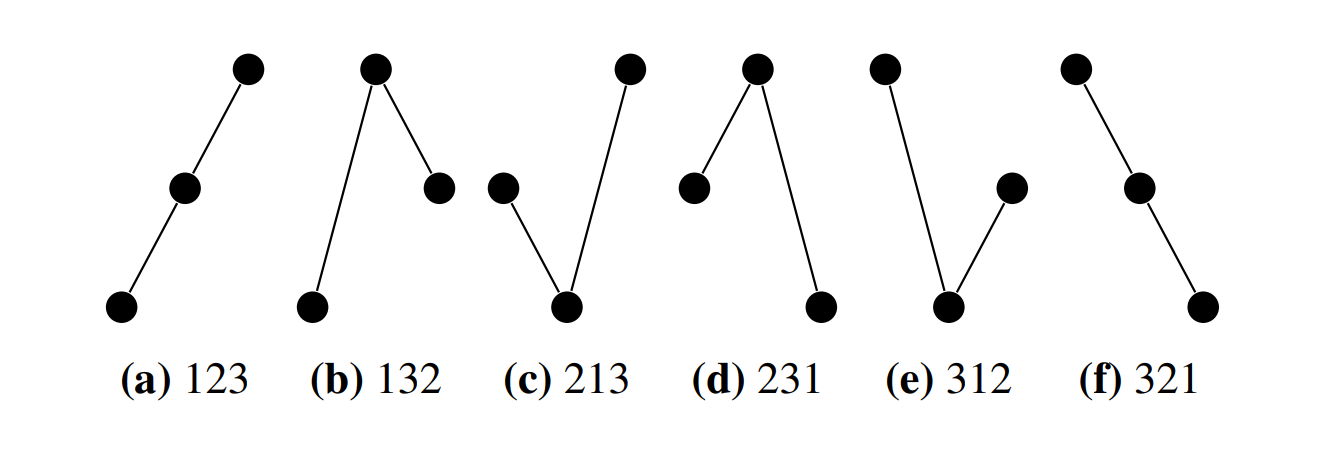}
    \caption{Permutation patterns for $m=3$ from \cite{john24}.}
    \label{fig:example}
\end{figure}

For example, the parameter $\alpha$ has the `Up-Down balance' property as it represents the frequency of turning points relative to monotonic behaviour in the signal. This is clear visually as we see the patterns 132, 213, 231 and 312 represent clear minima and maxima with the middle value always being a 1 (minima) or a 3 (maxima). Smooth signals will have lower values of $\alpha$, while highly oscillating signals will have higher values. This behaviour is somewhat similar to the eigenvalues of the Graph Laplacian where the higher eigenvalues correspond to higher frequencies. Thus, it would make sense for the Laplacian eigenvalues to have a positive correlation with $\alpha$, i.e., signals defined on a graph with generally large Laplacian eigenvalues would have a higher value for $\alpha$ regardless of the actual behaviour of the signal itself. This relates information from the graph topology (Laplacian eigenvalues) and the graph signal itself ($\alpha$), showing how both topological and signal properties influence the Graph Permutation contrasts compared to the uni-variate case.

For a detailed explanation on the interpretation of the other parameters, we refer the interested reader to \cite{Bandt23}. For this work, brief explanations will suffice. $\beta$ represents the up-down balance, i.e., the `balance' of strictly increasing and decreasing patterns in the signal.  $\gamma$ distinguishes patterns with the intermediate value at the beginning compared to the end of the length 3 pattern (thus the name rotational symmetry), with increasing values for high amplitude oscillations, negative for damped oscillations and zero for time-reversible processes. $\delta$, termed as `down-scaling', is highly correlated to $\beta$, and is zero for time-reversible, self-similar and spatially symmetric processes. It is quite evident that $\beta$ , $\gamma$ and $\delta$ all represent some sort of `distance' from symmetry and/or reversibility.

\subsection{Continuous ordinal patterns}
We will now introduce continuous ordinal patterns from Zanin~\cite{Zanin23} as a necessary pre-requisite for our introduction of continuous Graph Permutation Entropy and its corresponding contrasts. Defined in a natural way, an ordinal pattern of length three is normalized to the range $\left[-1,1\right]$ where the intermediate pattern can be any value in the range (infinite possible values exist). The possible patterns are:\begin{align*}
(-1, *, 1) \quad & (-1, 1, *) \quad (\ast, -1, 1) \\
(\ast, 1, -1) \quad & (1, -1, *)  \quad (1, *, -1)
\end{align*}
with $\ast$ representing the continuous intermediate value. Given a reference pattern as input and a time-series, windows of length $m$ are used to find the normalized distance between each window and the reference pattern. For instance, given a window starting at time $t$  where $s=(x_t, x_{t+1}...,x_{t+m-1})$
is the normalized window of length $m$ and a reference pattern some permutation of i.e. $m=3$, $\pi = (\pi_0, \pi_1, \pi_2)$, with $\pi_i = -1$ and $\pi_j = 1$ (where $i$ and $j \in \{0, 1, 2\}$).
The `similarity' of the reference pattern with a sub-window $\phi_{\pi}(t)$ is defined as:
\[
\phi_{\pi}(t) = \frac{1}{2D}\sum_{i=0}^{D-1} d_i = \frac{1}{2D}\sum_{i=0}^{D-1} |\pi_i - s^*_i|,
\]
with the $1/2D$ constant normalizing the value between 0 and 1 to give a natural representation of similarity, such as when they are perfectly equal $\phi_{\pi}(t)$ is 0. The overall similarity of the reference pattern to the signal as whole can be calculated by taking the average of $1-\phi_{\pi}(t)$ over the windows. This shows how important a reference pattern is in determining the dynamics of the signal. Note the intermediate value behaves like a tunable parameter which can be optimized to find the `best' pattern for the signal at hand. Additionally, it is interesting to note that iteratively applying the reference pattern to each window returns another time-series, akin to a convolution operation. This can be considered as a sort of `filtration' of the time-series and thus the optimization problem depends on choosing the best or most optimal filter for the signal.

This is easily extended to the Graph Signal case where instead of taking windows of length $m$, we take $m,m-1...$-step neighborhood averages and essentially replicate the process. Thinking of this in graph signal terms, we can now apply an individual, potentially distinct, filter to each node and, in principle, find the optimal filter at the  node level in order to optimize some sort of global or local criterion, providing a more granular form of optimization. Zanin~\cite{Zanin23} applied an interesting optimization technique to obtain a reference pattern that maximizes a Kolomorogov-Smirnov (K-S) two-sample test to find a continuous pattern for which its average similarity ($1-\phi_{\pi}(t)$) is greater than that of a randomly shuffled time series. Maximizing the K-S statistic there meant obtaining the reference pattern that best identified the temporal dependencies of the dynamics that were being analyzed. This does imply that the reference pattern can be optimized to distinguish different characteristics of the signal, e.g., robustness to noise. This approach is intuitive and we will use this as our main inspiration for the Continuous version of (Graph) Permutation Entropy.

\subsection{Continuous (Graph)
Permutation Entropy and Ordinal Contrasts}

We first define our ordinal activation function. This is a function that adds non-linearity to the embedding vector in Graph Permutation Entropy. This corresponds to the traditional deep learning activation function. Note this can be applied to traditional Permutation Entropy analogously.

\vspace{5pt} 

\begin{definition}[Ordinal activation function]

Let $\mathbb{R}^m$ denote the space of real-valued embedding vectors of dimension $m$. Then, $\rho_{\text{act}}: \mathbb{R}^m \rightarrow \mathbb{R}^m$ is an ordinal activation function if it satisfies the following properties:
\begin{enumerate}
    \item Strict increasing monotonicity: For any $\mathbf{x}, \mathbf{y} \in \mathbb{R}^m$ such that $\mathbf{x} < \mathbf{y}$ (component-wise), then $\rho_{\text{act}}(\mathbf{x}) < \rho_{\text{act}}(\mathbf{y})$.
    \item Continuity: The function $\rho_{\text{act}}$ is continuous on its domain $\mathbb{R}^m$.
\end{enumerate}

Moreover, for a real-valued embedding vector $\mathbf{x} = (x_1, x_2, \ldots, x_m) \in \mathbb{R}^m$, the ordinal activation function $\rho_{\text{act}}$ maps it to another real-valued embedding vector $\mathbf{y} = (\rho_{\text{act}}(x_1), \rho_{\text{act}}(x_2), \ldots, \rho_{\text{act}}(x_m)) \in \mathbb{R}^m$.

\end{definition}

\vspace{10pt} 

Now we will introduce a continuous version of Graph Permutation Entropy and its natural extension to its corresponding length 3 pattern contrasts. 

\vspace{5pt} 

\begin{definition}[Continuous (Graph) Permutation Entropy] \label{def:$CPE_G$}

Let $G = (V, E, \textbf{A})$ be a graph and $\textbf{x} = \{x_i\}_{i=1}^{N}$ be a signal on the graph. The Continuous Permutation entropy for the graph signals, $CPE_G$, is calculated as follows with embedding dimension $m$ and delay $L$: 

\vspace{5pt} 
\begin{enumerate}
    \item We (optionally) normalize the entire graph signal over the topology using, for example, the standard $z$-score formula (or a suitable normalization method specific to the data at hand) and compute the embedding vector \( \textbf{y}^{m,L}_i \) by calculating the \( m \)-neighborhood of a given node \( i \) using equation (2):
    \begin{align*}
    %\\
    \textbf{y}^{m,L}_i &= \left(y^{kL}_i\right)_{k=0}^{m-1} = \left(y^0_i, y^L_i, \dots, y^{(m-1)L}_i\right).
    \end{align*}
    
    \item Apply the ordinal activation function ($\rho_\text{act}$). 
    \begin{align*}
    \textbf{a}^{m,L}_i &= \rho_{\text{act}}(\textbf{y}^{m,L}_i ).
    \end{align*}
    Note that since the ordering is maintained, distinct permutations are also maintained.
    
    \item For each permutation of length \( m \), we store the corresponding continuous ordinal activated vector \( a \) in the respective pattern class. The new embedding vector is now:
    \begin{equation}
    \begin{aligned}
    \textbf{a}^{m,L}_i &= \left(a^{kL}_i\right)_{i=0}^{m-1} = \left(a^0_i, a^L_i, \dots, a^{(m-1)L}_i\right), \\
     \end{aligned}
    \end{equation}
     The embedding vector (if the limit exists) is bounded as:
    \begin{equation}
    \begin{aligned}
    \inf \textbf{a}^{m,L}_i &= \lim_{x \to -\infty} \rho_{\text{act}}(x), \\
    \sup \textbf{a}^{m,L}_i &= \lim_{x \to +\infty} \rho_{\text{act}}(x).
    \end{aligned}
    \end{equation}
    
    \item Now define \( l \) as the many-to-one function that maps each node  to a  permutation:
    \begin{equation}
    l : \mathbb{R}^m \rightarrow \mathbb{S}_m
\end{equation}
    where $\mathbb{S}_m$ is the set of permutations of length $m$.
    
    \item Define \( T \) as a function that maps a  statistic to the activated embedding vector (such as the variance or mean), and take the absolute value of the continuous statistic to compute the \( * \) value for the permutation \( p \):

 \begin{equation}
    T : \mathbb{S}_m \rightarrow \mathbb{R}
\end{equation}
    
    \begin{equation}
    *_{i,p} = \left| (T \circ l)(\textbf{a}^{m,L}_i) \right|.
    \end{equation}

   \item Compute \( *_{i,p} \) for each node \( i \). Every node is assigned a pattern and a real value. For every permutation in \( \textbf{a}^{m,L}_i \), let \( Z^*_p \) denote the sum of the \( * \) values for nodes with permutation \( p \), where \( p \) ranges from 1 to \( m! \). Calculate the total sum of \( * \) values and then compute \( Z^*_p \) for each permutation. The probability of \( * \) originating from permutation \( p \), denoted as \( P(X^* = p) \), is calculated as:
    \[
    P(X^* = p) = \frac{Z^*_p}{\sum_{p=1}^{m!} Z^*_p}
    \]
    or in integral form:
    \[
    P(X^* = p) = \int_{\mathcal{X}^*} P(X^*) \, d(*_p)
    \]
    where \( \mathcal{X}^* \) is the support of the continuous joint probability distribution \( P(X^*) \) of all continuous ordinal patterns.

  \item Finally, use the definition of (normalized) entropy as follows:
\begin{equation}
\text{$CPE_G$}(m, L) = - \frac{1}{\ln m!} \sum_{p=1}^{m!} P(X^* = p) \ln P(X^* = p).
\label{$CPE_G$} 
\end{equation}
or:
\begin{equation}
\text{$CPE_G$}(m, L) = - \frac{1}{\ln m!} \int_{\mathcal{X}^*} P(X^*) \ln P(X^*) \, dX^*
\end{equation}

\end{enumerate}

\end{definition}

 Essentially, we are taking $P(X^* = p)$ to be the proportion of $P(X^*)$ composed  from the continuous * values from the individual distribution $P(X^*_p)$.

 This leads to the following proposition which allows us to perform such an operation:

\vspace{5pt} 
\begin{proposition}
\vspace{5pt} 

\textit{The individual probability distributions of continuous * values of nodes with ordinal pattern $p$, $P(X^{*}_{p})$  are mutually exclusive, and thus the joint probability distribution $P(X^{*})$ is well-defined.}
\vspace{5pt} 
\end{proposition}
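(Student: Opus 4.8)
The plan is to reduce the statement to the order-preserving property of the ordinal activation function, from which mutual exclusivity of the pattern classes follows at once. First I would invoke the two axioms of $\rho_{\text{act}}$ from its definition: strict monotonic increase, together with the injectivity that any strictly increasing function automatically enjoys, gives for every pair of scalar entries $y^{aL}_i,\,y^{bL}_i$ of the embedding vector $\textbf{y}^{m,L}_i$ the equivalence $y^{aL}_i < y^{bL}_i \Leftrightarrow \rho_{\text{act}}(y^{aL}_i) < \rho_{\text{act}}(y^{bL}_i)$ and, by injectivity, $y^{aL}_i = y^{bL}_i \Leftrightarrow \rho_{\text{act}}(y^{aL}_i) = \rho_{\text{act}}(y^{bL}_i)$. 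Applying $\rho_{\text{act}}$ componentwise to $\textbf{y}^{m,L}_i$ therefore leaves the argsort of the vector unchanged, so the ordinal pattern carried by $\textbf{a}^{m,L}_i = \rho_{\text{act}}(\textbf{y}^{m,L}_i)$ coincides with that of $\textbf{y}^{m,L}_i$; this is precisely the remark already recorded in Step~2 of Definition~\ref{def:$CPE_G$}.

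Next I would argue that the assignment map $l$ of Step~4, restricted to embedding vectors with pairwise-distinct components, is single-valued: a vector with distinct entries admits a unique sorting permutation, so $l(\textbf{a}^{m,L}_i)$ is a well-defined element of $\mathbb{S}_m$ and each node $i$ lands in exactly one class. Writing $V_p := \{\, i \in V : l(\textbf{a}^{m,L}_i) = p \,\}$ for $p = 1,\dots,m!$, the previous paragraph shows these sets are pairwise disjoint and partition $V$; equivalently the events $\{X^* = p\}$ are mutually exclusive and exhaustive, which is the first assertion of the proposition. For the second assertion, disjointness yields $\sum_{p=1}^{m!} Z^*_p = \sum_{i \in V} \big| (T \circ l)(\textbf{a}^{m,L}_i) \big|$ with every node contributing to exactly one summand, so the normalized quantities $P(X^* = p)$ of Step~6 are non-negative and sum to one, i.e. a legitimate probability mass function; in the integral formulation the individual distributions $P(X^*_p)$ are carried by the disjoint label-slices of $\mathcal{X}^* = \bigsqcup_p \mathcal{X}^*_p$, so the mixture $P(X^*)$ integrates to one and is well-defined.

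The only delicate point, and the main obstacle, is the treatment of ties, i.e. embedding vectors with two or more equal components, for which the sorting permutation, and hence $l$, is genuinely multi-valued. I would dispose of this either by assuming the (optionally $z$-scored) graph signal is in general position, so that tied embedding vectors form a negligible set that may be discarded, or by fixing once and for all a deterministic tie-breaking convention (for instance lexicographic in the neighbourhood order $k$). In either case the key observation is that $\rho_{\text{act}}$, being injective, neither creates nor removes ties, so whichever convention resolves ties for $\textbf{y}^{m,L}_i$ resolves them identically for $\textbf{a}^{m,L}_i$; the partition of $V$ into the classes $V_p$ is then unambiguous and the argument above carries through verbatim. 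A secondary, purely bookkeeping, matter is to make precise what ``mutually exclusive distributions'' means here, namely that each $P(X^*_p)$ is regarded as living on the slice $\mathbb{R}\times\{p\}$ of $\mathbb{R}\times\mathbb{S}_m$, which is exactly what licenses assembling them into the single object $P(X^*)$.
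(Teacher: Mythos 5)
Your proposal is correct and follows essentially the same route as the paper's own proof: the core step in both is that the assignment map $l$ is single-valued, so each node belongs to exactly one pattern class, the classes partition the node set, and the normalized $Z^*_p$ therefore sum to one. Your additional care about $\rho_{\text{act}}$ preserving order and about tie-breaking is handled only informally in the paper's surrounding text, so your write-up is simply a more explicit version of the same argument.
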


\begin{proof}
Let $i$ be a node in the graph. Suppose $p_1$ and $p_2$ are two distinct permutations such that the distributions $P(X^{*}_{p1})$ and $P(X^{*}_{p2})$ have a common node $i$.

\vspace{5pt} 

Let $l(i) = p_1$ and $l(i) = p_2$.
Since $l$ is a function, $p_1$ and $p_2$ must be the same permutation, i.e., 

\[p_1 = p_2\]

This contradicts the assumption of distinct permutations. Therefore, each node can only be mapped to one permutation distribution.

\[\Rightarrow P(X^{*}_{p1}) \cap P(X^{*}_{p2}) = \emptyset\].

$\Rightarrow$  All $P(X^{*}_{p})$ distributions are mutually exclusive.

\[
\Rightarrow \sum_{p=1}^{m!} P(X^{*} = p) = 1
\]
and
\[
\Rightarrow\int_{\mathcal{X}^*} P(X^*) \, dX^*=1
\]
where $X^{*}$ is the joint probability distribution of all permutations $p$.

\end{proof}
\vspace{10pt}

This is an intuitive extension to normal $PE_G$ as now we are 
applying non-linearity to the window under consideration. The choice of function being restricted to continuous, strictly monotonically increasing functions allows us to maintain the ordering in patterns, this is an application of the \textit{invariance property} of Permutation Entropy with respect to monotonic transformation.

\begin{theorem}
\textit{The invariance property \cite{Bandt08} with respect to \textit{strictly} monotonic transformation of the time signal is an important property of the permutation entropy (PE). If $\textbf{x}$ is a time series, and $f$ is an arbitrary strictly increasing (or decreasing) real function, then the classical PE of the time series $\textbf{x}$ and $f(\textbf{x})$ are equal, i.e}
\begin{equation}
H(\textbf{x}) = H(f(\textbf{x}))
\end{equation}
\textit{where
$H(\textbf{x})$ is the permutation entropy of the original time series $X$,
$f(\textbf{x})$ represents the time series transformed by the function $f$, and
$H(f(\textbf{x}))$ is the permutation entropy of the transformed time series.}
\end{theorem}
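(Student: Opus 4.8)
The plan is to reduce the statement to the core combinatorial fact underlying permutation entropy: that the ordinal pattern of a vector depends only on the relative order of its components, and that a strictly monotonic map preserves that relative order. First I would fix notation: for a time series $\textbf{x}$ of length $N$, embedding dimension $m$ and delay $L$, let $\textbf{x}_t = (x_t, x_{t+L}, \dots, x_{t+(m-1)L})$ denote the embedding window starting at index $t$, and let $\pi(\textbf{x}_t) \in \mathbb{S}_m$ be the permutation that sorts $\textbf{x}_t$ into increasing order (assuming no ties, or with ties broken by a fixed rule). The permutation entropy $H(\textbf{x})$ is then the normalized Shannon entropy of the empirical distribution $\{p(\pi)\}_{\pi \in \mathbb{S}_m}$ where $p(\pi)$ counts the fraction of windows $t$ with $\pi(\textbf{x}_t) = \pi$.

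The central lemma I would prove is: if $f:\mathbb{R}\to\mathbb{R}$ is strictly increasing, then for every window $t$, $\pi(f(\textbf{x})_t) = \pi(\textbf{x}_t)$, where $f(\textbf{x})$ is the componentwise image. This follows because for any two indices $a,b$ in the window, $x_a < x_b \iff f(x_a) < f(x_b)$ by strict monotonicity, so the sorting permutation is unchanged. The strictly decreasing case is handled analogously: $f$ reverses every pairwise comparison, so $\pi(f(\textbf{x})_t)$ is the reversal of $\pi(\textbf{x}_t)$; since reversal is a bijection of $\mathbb{S}_m$, the multiset of pattern frequencies is merely relabeled and the entropy — being invariant under relabeling of the alphabet — is unchanged. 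I would then conclude that the empirical pattern distribution of $f(\textbf{x})$ equals (increasing case) or is a bijective reindexing of (decreasing case) that of $\textbf{x}$, and hence $H(\textbf{x}) = H(f(\textbf{x}))$ since Shannon entropy depends only on the multiset of probabilities.

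I would also note the one technical caveat worth a sentence: ties. If the definition of PE discards windows with equal values or assigns ties by index order, then strict monotonicity of $f$ guarantees $x_a = x_b \iff f(x_a) = f(x_b)$, so the set of tied windows and the tie-breaking outcome are both preserved, and the argument goes through verbatim. This is genuinely the whole content of the theorem — it is a classical observation from Bandt and Pompe — so there is no deep obstacle; the main thing to get right is simply stating precisely what "ordinal pattern" means so that the invariance is manifest, and handling the decreasing case via the reversal bijection rather than trying to claim the patterns are literally equal. If one wanted the graph version, the same argument applies to the embedding vectors $\textbf{y}^{m,L}_i$ of Definition~1, since applying a strictly monotonic $f$ to the whole graph signal does not in general commute with the neighborhood averaging; hence the theorem as stated is deliberately restricted to the classical (path-graph) case, and that restriction should be kept.
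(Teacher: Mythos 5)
Your proof is correct, and it is the standard argument for this classical Bandt--Pompe result: strict monotonicity preserves every pairwise comparison within an embedding window, so increasing $f$ leaves each ordinal pattern unchanged, decreasing $f$ acts by the reversal bijection on $\mathbb{S}_m$, and Shannon entropy is invariant under a relabeling of the pattern alphabet; the tie caveat is also handled properly. Note that the paper itself offers no proof here --- the theorem is stated as a known property with a citation to Bandt and Pompe --- so there is nothing to diverge from; your writeup simply supplies the argument the citation stands in for. Your closing observation is worth keeping: the paper's remark that the property ``translates trivially to $PE_G$'' is really about applying the strictly increasing activation componentwise to the already-averaged embedding vectors $\textbf{y}^{m,L}_i$ (your central lemma applied vector by vector), not about applying $f$ to the raw graph signal before neighborhood averaging, which indeed does not commute with the averaging and is not covered by the theorem as stated.
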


As this property translates trivially to $PE_G$. It is implied that the application of our ordinal activation function on windows of length $m$ preserves the Discrete $PE_G$ of the signal in all cases.

This gives us the following proposition.

\begin{proposition}
    
\vspace{5pt} 

\textit{For all graph signals \( \textbf{x} \) over the graph \( G \), $PE_G$ is a special case of $CPE_G$ when \( T \) is equal to the identity function that maps all vectors to the scalar value 1. This property is maintained regardless of the choice of \( \rho_{\text{act}} \).}

\end{proposition}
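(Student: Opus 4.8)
The plan is to show that under the stated choice of $T$, the quantity $P(X^* = p)$ produced by the $CPE_G$ algorithm collapses exactly to the relative frequency $p(\pi_p)$ used in the definition of $PE_G$, so the two entropies agree term by term. First I would fix an arbitrary graph signal $\mathbf{x}$ on $G$, an arbitrary admissible $\rho_{\text{act}}$, and arbitrary parameters $m, L$, and trace the algorithm of Definition \ref{def:$CPE_G$}. In Step 2 the activation is applied to each embedding vector $\mathbf{y}^{m,L}_i$; by the strict monotonicity of $\rho_{\text{act}}$ the ordinal pattern of $\mathbf{a}^{m,L}_i$ equals that of $\mathbf{y}^{m,L}_i$ (this is precisely the invariance recorded in Theorem 1 and the remark following it), so the many-to-one map $l$ assigns each node $i$ the same permutation it would receive in the ordinary $PE_G$ computation. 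Hence the partition of the node set $V$ into pattern classes $\{i : l(\mathbf{a}^{m,L}_i) = p\}$ is identical to the one underlying $PE_G$, and in particular $|\{i : l(i)=p\}| = N_p$, the count feeding $p(\pi_p) = N_p / \sum_q N_q$.

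Next I would evaluate $*_{i,p}$. With $T$ taken to be the map sending every vector to the scalar $1$, we get $*_{i,p} = |(T \circ l)(\mathbf{a}^{m,L}_i)| = |1| = 1$ for every node $i$, regardless of which pattern $i$ belongs to and regardless of $\rho_{\text{act}}$. Consequently $Z^*_p = \sum_{i : l(i) = p} *_{i,p} = N_p$, the cardinality of the $p$-th pattern class, and the normalizing denominator is $\sum_{p=1}^{m!} Z^*_p = \sum_p N_p = N$ (more precisely the number of nodes actually assigned a pattern, which matches the denominator in $PE_G$). Therefore
\begin{equation}
P(X^* = p) = \frac{Z^*_p}{\sum_{q=1}^{m!} Z^*_q} = \frac{N_p}{\sum_q N_q} = p(\pi_p).
\end{equation}
Substituting this equality into the $CPE_G$ formula (\ref{$CPE_G$}) gives
\begin{equation}
\text{$CPE_G$}(m,L) = -\frac{1}{\ln m!}\sum_{p=1}^{m!} p(\pi_p)\ln p(\pi_p) = \text{$PE_G$}(m,L),
\end{equation}
which is the claim; the independence from $\rho_{\text{act}}$ is manifest because $\rho_{\text{act}}$ entered only through the ordering, which it preserves, and $T \equiv 1$ discards all amplitude information anyway.

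I do not expect a genuine obstacle here — the statement is essentially a consistency check — but the one point requiring care is the bookkeeping around the denominator and any degeneracies: one should confirm that ties (embedding vectors with equal components) and nodes with ill-defined neighborhoods are handled identically in both definitions, so that "$\sum_p N_p$" really is the same normalization in both, and one should note the harmless abuse whereby Proposition 1 guarantees $\sum_p P(X^*=p)=1$ so the discrete and integral forms of $CPE_G$ coincide in this degenerate case (each $P(X^*_p)$ being a point mass at $1$). With those remarks in place the proof is just the chain of equalities above, and I would present it in roughly a paragraph.
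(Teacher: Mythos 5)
Your proposal is correct and follows essentially the same route as the paper's own proof: fix $T\equiv 1$, note that strict monotonicity of $\rho_{\text{act}}$ preserves each node's ordinal pattern, conclude $*_{i,p}=1$ so that $Z^*_p$ reduces to the pattern frequency and $P(X^*=p)$ to the relative frequency $p(\pi_p)$, hence the two entropies coincide. Your added remarks on ties and the normalizing denominator are careful bookkeeping the paper leaves implicit, but they do not change the argument.
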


\begin{proof}
Define the set \( \mathcal{M} \) as the set of real-valued functions that are strictly monotonically increasing. Define:
\[
I_m : \mathbb{R}^m \rightarrow \mathbb{R}
\]
\[
I_m(y) = 1
\]

Fix:
\[ 
T = I_m 
\]

As:
\[
\rho_\text{act} \in \mathcal{M}
\]
\[
y_i > y_j \Rightarrow \rho_{\text{act}}(y_i) > \rho_{\text{act}}(y_j)
\]

This maintains the ordering of \( (\textbf{y}_i^m,L) \). Thus note:
\[
|(T \circ l)(\rho_{\text{act}}(\textbf{y}_i^m,L))| = 1, \forall \rho_{\text{act}} \in \mathcal{M}
\]

\[
\Rightarrow *_{i,p} = 1 \text{ for each node mapped to a pattern } p
\]

\[
\Rightarrow 
Z^*_p = \pi_{p}
\]
where \(\pi_{p} \) is the frequency of pattern \( p \) in the signal.
\[
\Rightarrow P(X^* = p) = \frac{\pi_{p}}{\sum_{p=1}^{m!} \pi_{p}} = p(\pi_{p})
\]
which is just the relative frequency of counts of the patterns in Definition 1. Thus:
\[
T = I_m \Rightarrow CPE_G=PE_G , \forall \rho_{\text{act}} \in \mathcal{M}
 \]
\

\end{proof}

As a possible pre-processing step, $z$-score normalization can be used over the whole graph signal before constructing the embedding vector to account for the case where signal values are extremely small or  extremely large. This can result in every value in the signal being mapped to the same value due to the structure of certain ordinal activation functions. Using the hyperbolic tangent as an example with the range $[-1,1]$, very large values would be mapped to 1 and vice-versa for very small values. Now everything will be mapped to the class accounting for ties making the method impractical. This is prominent in some signal types, such as fMRI signals where signal values are typically very large. Interestingly, this can be seen as a direct ordinal deep learning translation of the vanishing/exploding gradient problem. (Optional) Normalization over the neighbourhood topology ensures that while the range of the signal is still technically infinite, extreme values are very unlikely to occur.  The normalization is typically done over the whole signal $\textbf{x}$ at the start so relative amplitudes are maintained in the windows. Once again, this is similar to deep learning methods where normalization is sometimes used as a pre-processing step \cite{norm}. We can use any type of normalization such as min-max normalization as well depending on the data at hand. As in the discrete case, each node has a continuous pattern maintaining the granular analysis efficiency. Ties are very rare due to how the embedding vector is constructed by taking neighbourhood averages, the continuous ordinal activation function being \textit{strictly} monotonically increasing further reduces the likelihood of ties, in-fact in our experiments we did not encounter any ties.

The use of a  statistic ($T$) allows us to extract information from the activated window that we deem relevant. For example, we may use the variance when we think the variation in amplitude may have discriminating information, or the mean when the entire window may be informative. Using the median as $T$ for a non-negative monotonically increasing ordinal activation function returns the intermediate value method used by Zanin~\cite{Zanin23}, while using the rank function (known as tiedrank in MATLAB) as our ordinal activation function with the identity function (maps all vectors to a scalar value of 1), we obtain classical $PE_G$. Given that these operations are consistent over all windows, the following step to compute the relative weighted probabilities and thus the probability distribution for input into Shannon Entropy is a valid operation. In essence, we are projecting all of the embedding vectors from the real line to the same ordinal activation function and comparing their relative behaviour on this continuous function.

We will show this with an example. When $m = 3$, the possible continuous permutation patterns are defined as:
\[
\begin{aligned}
& (1, 2, 3) && (1, 3, 2)  \\
& (2, 3, 1) && (3, 1, 2) \\
& (3, 2, 1)&& (2, 1, 3)
\end{aligned}
\]

Consider an embedding vector where $m=3$: $(0.5, 0.1, 1.7)$. Now let us choose a ordinal activation function. For example, the GELU (Gaussian Error Linear Unit) \cite{gelu} function, a commonly used activation function in many deep learning use cases, is defined as:
\begin{equation}
\mathrm{GELU}(x) = \frac{1}{2}x\left(1 + \mathrm{erf}\left(\frac{x}{\sqrt{2}}\right)\right)
\end{equation}
with $\mathrm{erf}$ being the traditional Gaussian error function.
Applying this function gives us $(0.3457,0.0540,1.6242)$. Now we choose the mean to be our  statistic ($T$) and we compute $\left|T(a^{m,L}_i)|\right|$. This gives us 0.6747 as our $*_{i,p}$  value for that node. We must take the absolute value as having negative weights in our probability computation will result in incorrect results. This can be considered as a weighted `count' as in discrete $PE_G$ we would have the value just being 1, we have 0.2282 here, determined by the non-linearity of the ordinal activation function and the corresponding  statistic. These are parameters that can be fine-tuned based on a priori information about the signal under consideration. Note how throughout the computation the order (213) of the embedding vector is maintained.

Given another (312) vector $(2.7,0.1,1.8)$, this is activated to $(2.691,0.054,1.735)$. Taking the  statistic gives us $*_{i,p}=1.493$. Observe how amplitude values are now taken into account while maintaining the ordinal pattern structure. Each * is unique to a pattern; it is not a global variable. 

Now, let's say these two were the only patterns. 

\[
\begin{aligned}
P(X^* &= 312) &= \frac{1.493}{1.493+0.6747} = 0.6887 \\
P(X^* &= 213) &= \frac{0.6747}{1.493+0.6747} = 0.3113
\end{aligned}
\]
where $P(X^* = 31*)+P(X^* = *13)=1$. We can now plug these probabilities into the Shannon Entropy formula.
Note that in the discrete case, if these were the only patterns, the distribution would be equal, as we are using relative frequencies, i.e., $p(312) = p(213) = 0.5$.

Observe that now the relative pattern probability distributions are being changed and mapped in a non-linear way while still maintaining the nature of the ordinal patterns. The higher amplitude pattern windows are being mapped to higher values at a rate determined by the ordinal activation function.
Another way to interpret this is that instead of measuring the proportion each ordinal pattern makes up of the total number of patterns we are measuring how much of the total power of the (graph) signal each ordinal pattern is contributing to.

Theoretically, we can now manipulate the parameters of this algorithm to `train' the model on a given signal or learn the best ordinal activation function specific to the signal.
For example, if we know that most values in the signal fluctuate around a certain range but relevant signal spikes are also not too different from this general background amplitude, we could use an exponentially increasing function to emphasize larger values or spikes in the neighbourhood as illustrated in Figure \ref{fig:ar1_comparison}. We can also now optimize the gradient of the ordinal activation function as an iterative training step similar to back-propagation. Furthermore, not only are amplitudes considered within windows but relative to the whole signal as well, addressing a long-running drawback of permutation entropy. The ordinal activation function can also be optimized to reduce the influence of noise by mapping to a function with a very slowly increasing gradient maintaining a stable value. This can clearly be extended to uni-variate and multi-variate time-series signals.

\begin{figure}[h]
    \centering
    \includegraphics[width=0.5\textwidth]{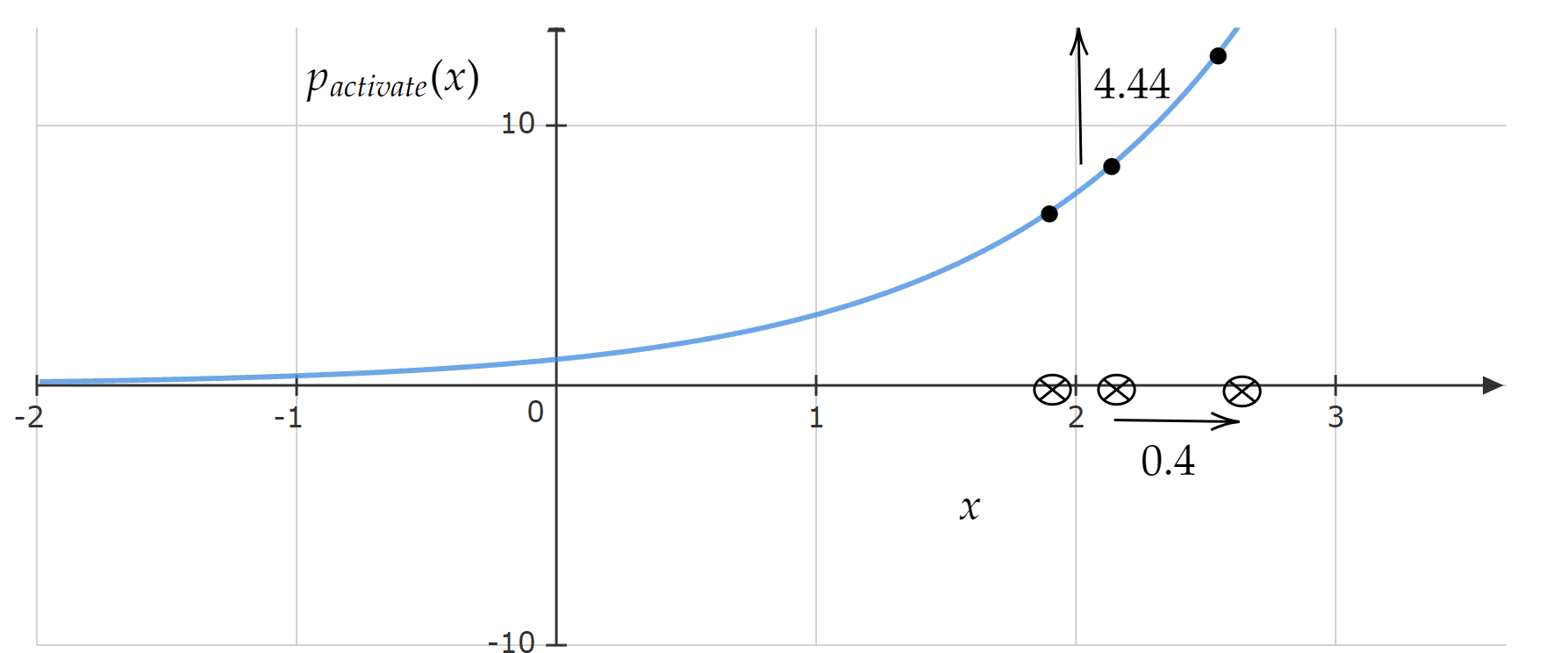}
    \caption{Example of spike emphasis when using exp ordinal activation function}
    \label{fig:ar1_comparison}
\end{figure}

\begin{definition}[Continuous (Graph) Ordinal Contrasts]
Analogously we define the continuous versions of the Graph Permutation Contrasts as in Table \ref{tab:Def2}

\begin{table}[H]
    
    \centering
    \caption{Definition of continuous ordinal contrasts.}
    \small
    
    \begin{tabular}{p{6cm}}
    \toprule
    Continuous Graph Contrasts \\
    \midrule
    $\alpha=P(X^* = 132) + P(X^* = 213) + P(X^* = 231) + P(X^* = 312)$ \\
    $\beta=P(X^* = 123) - P(X^* = 321)$ \\
    $\tau=P(X^* = 123) + P(X^* = 321) - \frac{1}{3}$ \\
    $\gamma=P(X^* = 213) + P(X^* = 231) - P(X^* = 132) - P(X^* = 312)$ \\
    $\rho=P(X^* = 132) + P(X^* = 213) - P(X^* = 231) - P(X^* = 312)$ \\
    \bottomrule
    \end{tabular}
    \label{tab:Def2}
\end{table}
\end{definition}

This is a straightforward extension as the probabilities can just be considered ordinal activation function weighted relative pattern frequencies compared to the relative pattern frequencies in the discrete case.

\subsection{Graph image pattern contrasts}

Methods to evaluate images are in high demand in recent times. \cite{Bandt22} recently introduced two new ordinal patterns for analyzing images. They use $2 \times 2$ windows to determine local patterns in images, the 24 possible permutations are then grouped into 3 types based on symmetrical properties while defining two new contrasts based on these types. We now extend this to the graph case and introduce the corresponding image contrasts. 
 %\vspace{5pt}
\begin{definition}[Graph Image Types]

\end{definition}

Let \(\{\textbf{X}_{i,j}\}_{i,j=1}^N\) represent an image with size \(N \times N\). Let \(G\) be a Regular 2D Grid graph of size \(N\). The adjacency matrix\textbf{ A} of the (undirected) grid graph is defined as:

\[
\textbf{A}_{(i,j),(k,l)} = 
\begin{cases} 
1 & \text{if } |i - k| + |j - l| = 1 \\
0 & \text{otherwise}
\end{cases}
\]

The adjacency matrix for a \(3 \times 3\) grid graph is:

\[
\begin{bmatrix}
0 & 1 & 0 & 1 & 0 & 0 & 0 & 0 & 0 \\
1 & 0 & 1 & 0 & 1 & 0 & 0 & 0 & 0 \\
0 & 1 & 0 & 0 & 0 & 1 & 0 & 0 & 0 \\
1 & 0 & 0 & 0 & 1 & 0 & 1 & 0 & 0 \\
0 & 1 & 0 & 1 & 0 & 1 & 0 & 1 & 0 \\
0 & 0 & 1 & 0 & 1 & 0 & 0 & 0 & 1 \\
0 & 0 & 0 & 1 & 0 & 0 & 0 & 1 & 0 \\
0 & 0 & 0 & 0 & 1 & 0 & 1 & 0 & 1 \\
0 & 0 & 0 & 0 & 0 & 1 & 0 & 1 & 0 \\
\end{bmatrix}
\]

We then define the Graph image patterns using Definition 1 with $m=4$,
\begin{equation}
    \textbf{y}^{4,L}_i = \left(y^{kL}_i\right)_{k=0}^{3} = \left(y^0_i, y^L_i, \dots, y^{(3)L}_i\right).
\end{equation}

Given a graph permutation pattern $(p_1,p_2,p_3,p_4)$, of a given node produced by sorting the embedding vector, the following algorithm assigns a type to the $24$ possible length $4$ patterns according to the algorithm:

\begin{algorithm}[H]
    \caption{Assignment of types of patterns for images for $m=4$}
    \begin{algorithmic}[1]
        \Procedure{AssignType}{$p_1$, $p_2$, $p_3$, $p_4$}
            \State Compute $N=$ number of unique values in $\{p_1, p_2, p_3, p_4\}$
            \If{$N = 1$ or $N = 0$}
                \State Assign $t$ randomly from $\{1, 2, 3\}$
            \ElsIf{$N = 2$}
                \If{$p_1 = p_2$ or $p_3 = p_4$}
                    \State Assign $t$ randomly from $\{2, 3\}$
                \Else
                    \State Assign $t$ randomly from $\{1, 2\}$
                \EndIf
            \Else
                \State Compute $a = (p_1 < p_2) + (p_3 < p_4)$
                \If{$a = 2$}
                    \State $a \gets 0$
                \EndIf
                \State Compute $b = (p_1 < p_3) + (p_2 < p_4)$
                \If{$b = 2$}
                    \State $b \gets 0$
                \EndIf
                \State Compute $t = a + b + 1$
            \EndIf
            \State \textbf{return} $t$
        \EndProcedure
    \end{algorithmic}
\end{algorithm}

While ambiguous at first, the grouping is somewhat intuitive and is directly translated to the graph domain by considering neighbourhood averages  of nodes compared to the neighbouring pixel values. The three types are formed in such a way that they describe the symmetry classes of $2 \times 2$ patterns compared to the symmetry group of a square. Analogously, the graph versions takes quadrilateral neighborhood averages which results in 4 values. The signal values are considered the `top-left' value in a $2 \times 2$ grid, two one step neighbor average is the `top-right entry', the 2 step neighbour averages is the `bottom left' entry and the 3 step neighbor average is the bottom right entry. In this case, each $2 \times 2$ square represents the local dependencies of each pixel in the image giving us a pattern for each pixel. The symmetric behaviour of the neighborhood averages of the nodes/pixels then determines the grouping into a given type. i.e., the type of a $2 \times 2$ pattern is the rank number, which shares
a diagonal with the average of the 3 step neighbourhood. As an example, if p3 is on the same diagonal as p4, we assign type 3. As such, the graph permutation pattern is in the sorted form (p3,p1,p2,p4) or (p3,p2,p1,p4). Intuitively, type 1 values are either decreasing or increasing the neighborhood $2 \times 2$ matrix we have defined. This can be considered to represent the smoothness of the graph signal.
Type 2 implies either the rows are increasing or decreasing or the columns are with the non parallel increase axis having one increase and one decrease. This usually occurs in tree-like structured images. In type 3, both values on one diagonal are larger than both values on another diagonal. This could represent an edge in an image in the diagonal direction. We have essentially used a form of quantization to compress 24 patterns into 3 types as in \cite{Bandt22}. This is computationally effective and preserves the important symmetrical features in the images. 
 %\vspace{10pt}
 
\begin{definition}[Graph Image Pattern Contrasts]
We now introduce the Pattern contrasts for images. Let $t_1$, $t_2$, $t_3$ represent the relative frequencies of each type in the image. We then define:
\[
\theta = t_1 - \frac{1}{3} \quad \text{and} \quad \kappa = t_2 - t_3.
\]

$\theta \in [-1/3,2/3]$ is defined as the smoothness parameter. It is similar to the persistence in the $m=3$ case and achieves higher values for monotonically increasing functions and lower values for highly oscillating ones. We can expect it to be directly correlated with the eigenvalues of the underlying graph of the image. Thus, we can manipulate or emphasize specific features of the image by changing the underlying graph (i.e., adding weights). $\kappa \in[-1,1]$ is unique to $m=4$, it quantifies the presence of branching structures in the image compared to noise (usually demonstrated by a `checkerboard' like behaviour or alternating sequences). These contrasts represent different information as shown in \cite{Bandt22} and can be considered akin to an analysis of an image in the entropy-complexity plane.
\end{definition}

The ties are treated methodologically and clearly, as the type classification is based on comparing horizontal and vertical values on the $2 \times 2$ transformed window. They are treated as in Table~\ref{tab:ties}.

\begin{table}[H]
\label{tab:ties}
\centering
\caption{Treatment of ties for $2\times 2$ patterns.}
\begin{tabular}{lcc}
\toprule
\textbf{Case} & \textbf{Assigned Type} \\
\midrule
(1) Two equal values in one row or column & Random \\
(2) Two pairs of equal values in rows or columns & Type I or II \\
(3) Three equal values & Random \\
(4) Four equal values & Random \\
\bottomrule
\end{tabular}

\label{tab:ties}

\end{table}

Note that assignments happen with uniform probability. For instance, $1/3$ for random assignment to a type and with a probability of $1/2$ in case 2. This is a deterministic way of dealing with ties based on local node behaviour.

\section{Results}
% \vspace{10pt}
\subsection{$CPE_G$ characterises the behaviour of the logistic map better than its classical discrete counterpart}
% \vspace{10pt}

The logistic map \cite{May76} is a popular time-series to show the performance of entropy measures in detecting non-linear changes. Given by
\begin{equation}
x_{n+1} = rx_n(1 - x_n)
\end{equation}
we used an initial value $x_{0}=0.65$ and incremented $r$ by steps of size $10^{-4}$ where $r \in [3.55,4.0$]. We created a time series for each value of $r$. Two underlying graphs were considered, a directed path and an undirected path on $N=3034$ vertices. Chaotic behaviour is well-studied to occur in the range $3.5699 \leq r \leq 4$. We computed both $PE_G$ and $CPE_G$ with various ordinal activation functions. 
\begin{figure}[h]
    \centering
    % Add your plot here
    \includegraphics[width=0.5\textwidth]{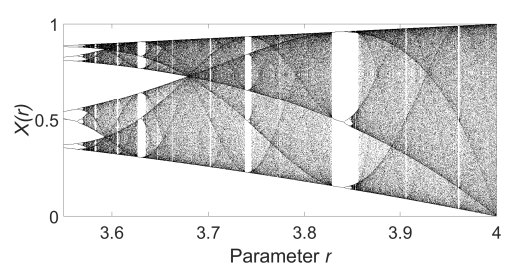}
    \caption{Bifurcation plot of the logistic map.}
    \label{fig:empty_plot}
\end{figure}

\begin{figure*}[t]
    \centering
    \begin{subfigure}{0.45\textwidth}
        \includegraphics[width=\linewidth]{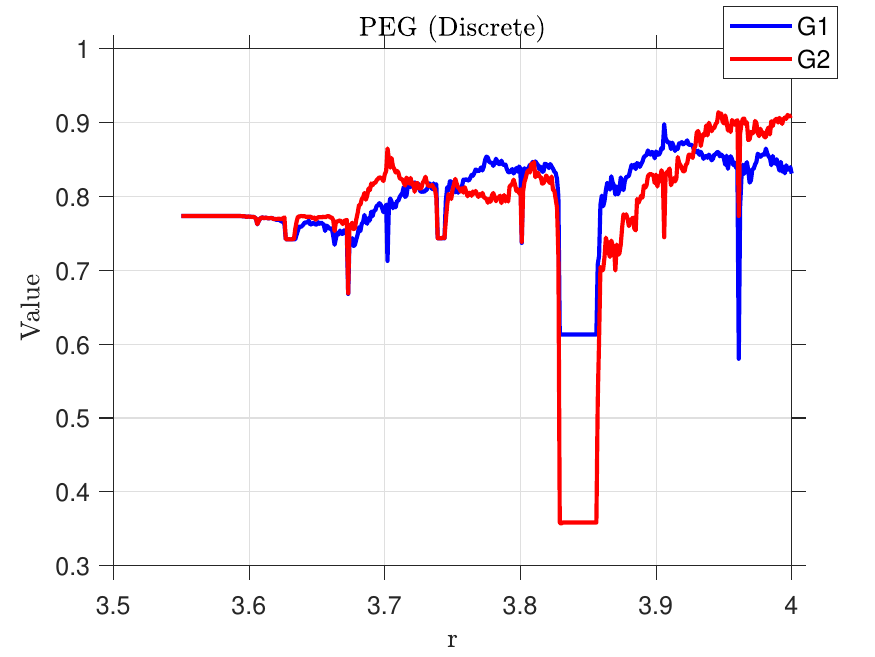}
        \caption{Discrete}
        \label{fig:sub1}
    \end{subfigure}
    \quad
    \begin{subfigure}{0.45\textwidth}
        \includegraphics[width=\linewidth]{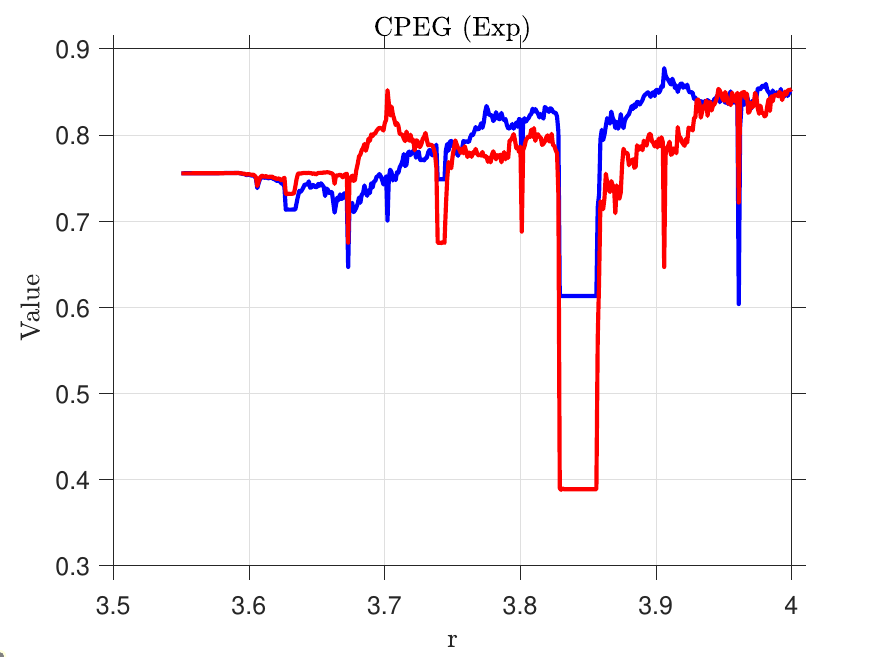}
        \caption{Exponential}
        \label{fig:sub2}
    \end{subfigure}
    \\
    \begin{subfigure}{0.45\textwidth}
        \includegraphics[width=\linewidth]{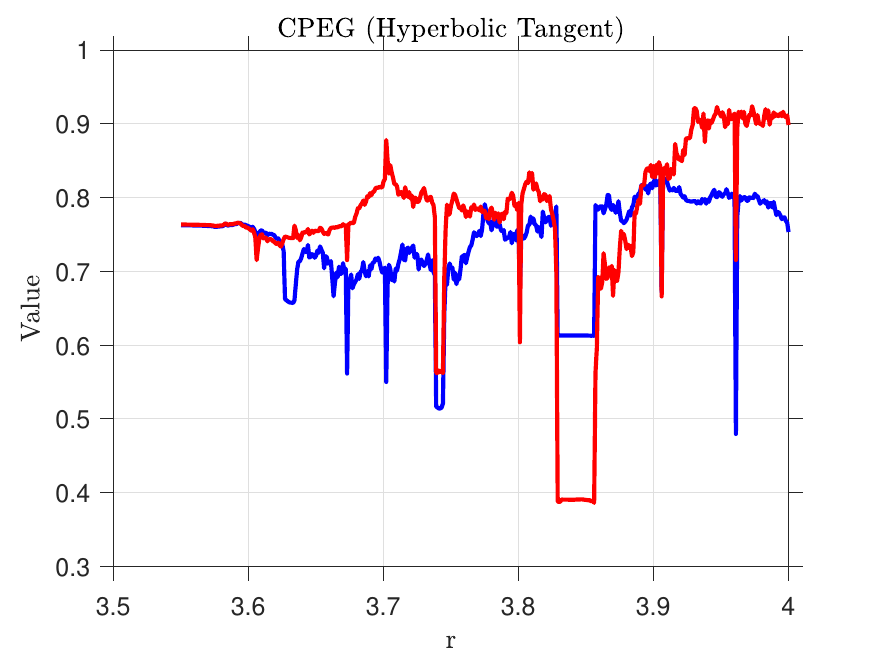}
        \caption{Hyperbolic Tangent}
        \label{fig:sub3}
    \end{subfigure}
    \quad
    \begin{subfigure}{0.45\textwidth}
        \includegraphics[width=\linewidth]{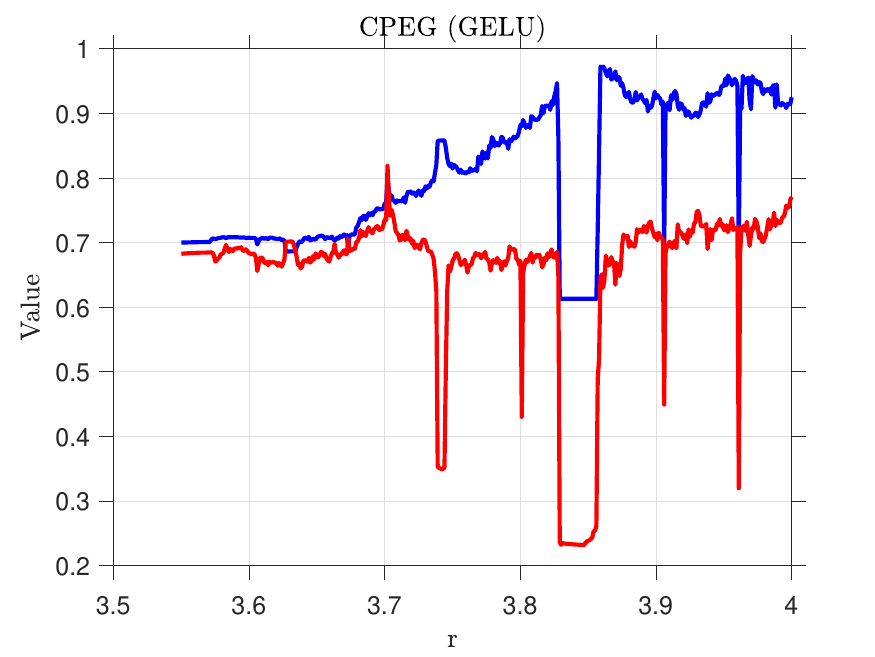}
        \caption{GELU}
        \label{fig:sub4}
    \end{subfigure}
    \caption{The Values of $PE_G$ and $CPE_G$ at different values of the parameter $r$ using various ordinal activation functions}
    \label{fig:log}
\end{figure*}

 Figure \ref{fig:empty_plot} shows the bifurcation plot of the logistic map. We observe mostly chaotic behaviour however there are islands of stability (white lines) with the most notable occurring between $3.8 \leq r \leq 3.9$. While both continuous and discrete versions of $PE_G$ can detect the large island of stability, the continuous version does seem to distinguish different change-points differently depending on the choice of ordinal activation function. Earlier islands of stability are also detected with sharper peaks with the continuous version, particularly in the undirected (G2, Red) case.

 In Figure \ref{fig:log} we notice that the exponential function is the one providing results most similar to the discrete version of $PE_G$. The hyperbolic tangent and GELU function seems to deviate more from the discrete case and can detect different islands of stability more clearly.
 This example illustrates how tuning the ordinal activation function can highlight various signal features of interest. While such tuning is beneficial, particularly in the context of deep learning, we must be cautious to avoid data dredging or overfitting. Importantly, even with adjustments to the activation function, the overall trends remain apparent in most cases.

We can see that $CPE_G$ can distinguish between periodic and chaotic activity better than the discrete, due to it robustly considering amplitude information during its computation. We also used the mean as our  statistic as that takes into account all the information in the window. We notice the plots are also significantly more detailed for the continuous case detecting subtle shifts in patterns in the chaotic signal.

Exploring this further, we assessed how much the probability distribution of $CPE_G$ deviated from $PE_G$ at different values of \( r \) in the logistic map for different ordinal activation functions. For this, we use the Kullback-Leibler Divergence \cite{kl} as our measure of deviation. The Kullback-Leibler Divergence (KL divergence) between two probability distributions \( P \) and \( Q \) is defined as:
\begin{equation}
    D_{\text{KL}}(P \| Q) = \sum_{i} P(i) \log\left(\frac{P(i)}{Q(i)}\right)
\end{equation}
and gives a measure of how one probability distribution diverges from a second.

\begin{figure}[h]
    \centering
    \includegraphics[width=0.5\textwidth]{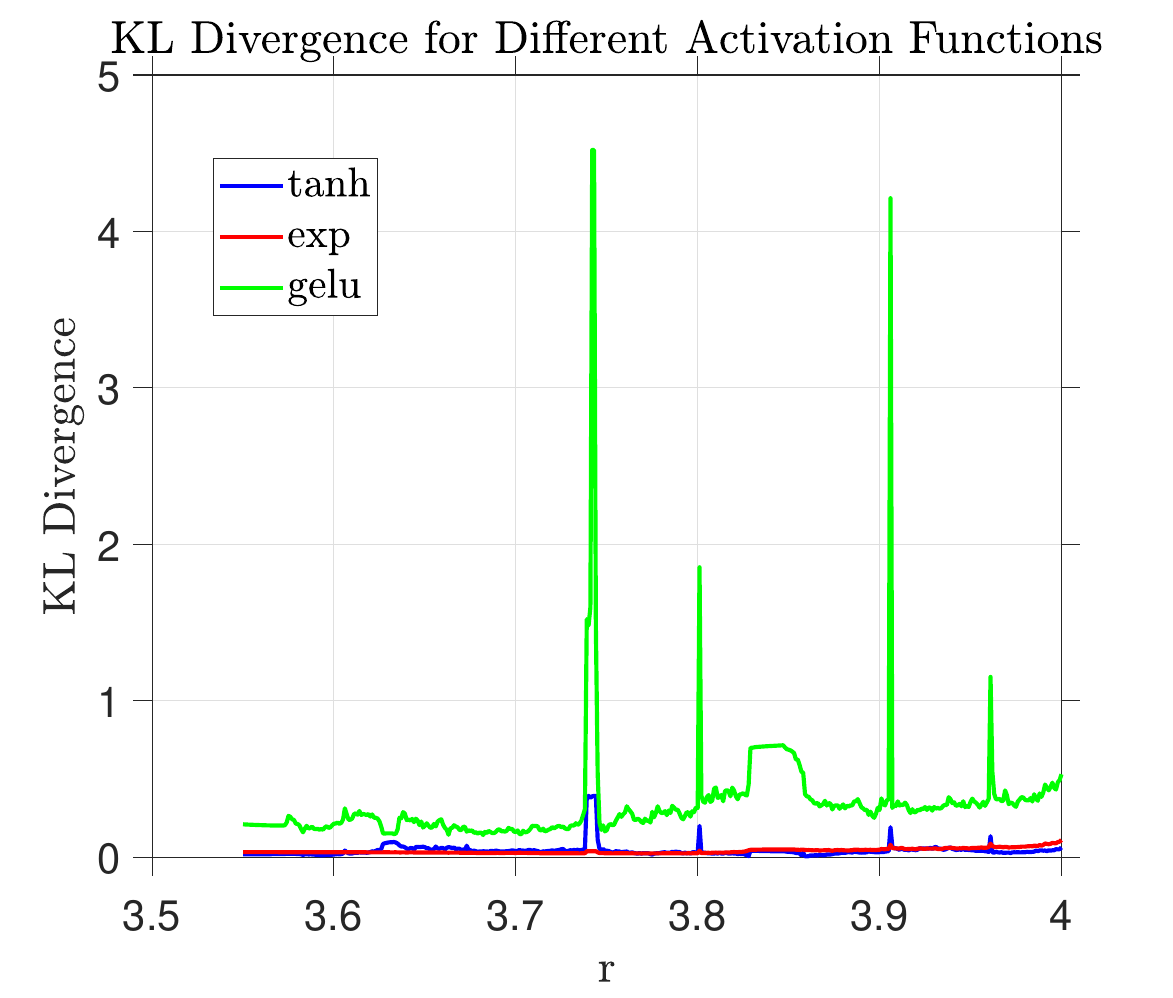}
    \caption{KL Divergence between $CPE_G$ and $PE_G$ in the Logistic Map with different Ordinal Activation Functions}
    \label{fig:kl}
\end{figure}

Inspecting Figure \ref{fig:kl}, all 3 ordinal activation functions remain stable for chaotic behaviour, i.e., their distributions are relatively similar to the discrete case. We note that functions with faster increasing gradients seem to be more stable relative to Discrete $PE_G$, while more slowly increasing gradients results in more deviation from the discrete case with large deviations particularly in the islands of stability.
In fact, the computation of the KL divergence between such functions and discrete $PE_G$ can actually detect islands of stability within chaotic behaviour. This implies that while the patterns that dominate the total frequency of patterns are similar to the patterns that contribute most to the overall power of the signal at islands of stability this is no longer the case.

It is useful to observe that in both the continuous $PE_G$ plot and the KL-Divergence plot that the global behaviour is very similar with subtle changes due to choice of ordinal activation function (the  statistic is kept constant as the mean) with most deviation in behavior occurring at the islands of stability. This is as expected and highlights how using different activation functions alters the behaviour at change-points allowing us to emphasize those that we deem relevant.

\subsection{Detecting non-linear spikes using Continuous Permutation Entropy}

We will now demonstrate the situation specific advantages where $CPE_G$ outperforms $PE_G$. 
We simulate an auto-regressive process of order 1 (AR(1)) with Gaussian non-linear spikes. The AR(1) process is defined by the equation \cite{AR1} :
\begin{equation}
X_{t+1} = \phi X_t + \varepsilon_{t+1}
\end{equation}
where $X_t$ represents the value of the process at time $t$, $\phi$ is the AR(1) parameter, and $\varepsilon_{t+1}$ is a Gaussian white noise with mean zero and standard deviation $\sigma$.

We set the parameters as follows: $\phi = 0.8$ and $\sigma = 1$. The simulation is conducted for $500$ time steps and $100$ realizations.

Additionally, we introduce Gaussian non-linear spikes into the process. The spike at time step $t$ centred at $t_{s}$ is represented by:
\begin{equation}
G(t)=A \cdot e^{-\frac{(t - t_s)^2}{2 W^2}}
\end{equation}
where $A$ is the amplitude of the spike, $t_{s}$ is the time step at which the spike occurs, and $W=2\sigma_{g}$ is the width of the spike with $\sigma_{g}$ the standard deviation of the spike. This spike time-series is added to the AR(1) process for 5 distinct time-steps $t_{s}=(50,100,200,250,450)$.

We use as our ordinal activation functions the exponential functionand the sigmoid function:
\begin{equation}
 \text{sigmoid}(x) = \frac{1}{1 + e^{-x}} 
\end{equation}
both with $T$ set as the mean of the embedding vector.

\begin{figure}[h]
    \centering
    \includegraphics[width=0.5\textwidth]{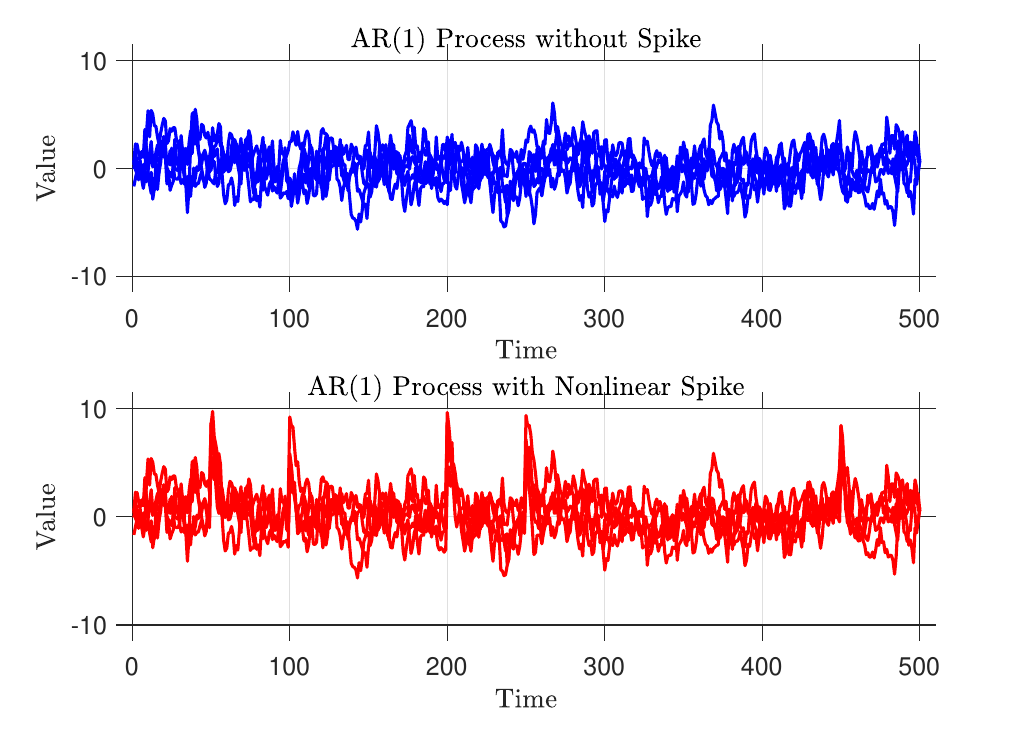}
    \caption{Comparison of AR(1) Process with and without Gaussian Spikes}
    \label{fig:ar1_comparison}
\end{figure}

Using an undirected path graph to map the time-series as a graph signal, we computed the $PE_G$ and contrast values for both ordinal activation functions and also for the discrete case for each of the 100 realizations of the signals with and without the Gaussian spike. Essentially we have two groups of 100 values (for each contrast and $(C)PE_G$) one with the spike and one without. We then performed the Wilcoxon rank-sum test for significance between the two groups.

\begin{table}[H]
\centering
\resizebox{\columnwidth}{!}{%
\begin{tabular}{|c|c|c|c|c|c|}
\hline
$PE_G$ & $\alpha$ & $\beta$ & $\gamma$ & $\tau$ & $\delta$ \\
\hline
\multicolumn{6}{|c|}{Exponential Function} \\
\hline
$8.8 \times 10^{-13}$ & $0.067$ & $3.2 \times 10^{-4}$ & $0.072$ & $0.067$ & $9.9 \times 10^{-8}$ \\
\hline
\multicolumn{6}{|c|}{Sigmoid} \\
\hline
$0.17$ & $0.36$ & $0.0018$ & $0.49$ & $0.36$ & $0.0018$ \\
\hline
\multicolumn{6}{|c|}{Discrete} \\
\hline
$0.60$ & $0.32$ & $0.040$ & $0.98$ & $0.32$ & $0.069$ \\
\hline
\end{tabular}%
}
\caption{p-values from the Wilcoxon rank-sum test for significance between parameter values across 100 realizations of graph signals with and without a Gaussian spike. The test was performed for $PE_G$ and the ordinal contrasts using the exponential and sigmoid ordinal activation functions for $CPE_G$, while discrete represents traditional $PE_G$.}

\label{tab:my-table}
\end{table}

Interestingly, as we inspect Table \ref{tab:my-table}, the sigmoid function, being a relatively slowly increasing function is not able to detect the non-linear spikes between signals with only the $\beta$ and $\delta$ contrasts detecting the difference in signals. The same is seen for the discrete case, with $PE_G$ failing to detect differences as well. The exponential functions mathematical properties allow it to detect these spikes very well, with $CPE_G$, $\beta$ and $\delta$ all detecting significant differences between signals. Given that the amplitude of the spikes is only a few values higher than the general behaviour of the signal, it seems that the exponential function is mapping these subtle differences to larger values than with functions with less-steep gradients, such as the natural log which would map the spike value similarly to general signal values. While we did see Discrete $PE_G$ perform well on directed graphs, many graph signals have underlying undirected graphs such as fMRI and EEG signals. We comprehensively demonstrate the benefit of $CPE_G$ in such a use case at detecting non-linear change-points in a signal.

\subsection{Real-world example: Analysis of weather data}
We analyse the real-life temperature readings of ground stations observed in Brittany, France, during January 2024 \cite{temp}. Here, the underlying graph is an un-directed weighted graph. The weights are determined using the Gaussian kernel of the Euclidean distance between vertices/ground station coordinates \cite{shuman}: 
\begin{equation}
\textbf{W}_{ij} = \begin{cases} 
\exp\left(-\frac{d(i,j)^2}{2\sigma^2}\right), & \text{if } d(i, j) \leq \sigma^2 \\
0, & \text{otherwise}
\end{cases}
\end{equation}

We first explored what extra information the (discrete) pattern contrasts could give about the graph signal on the underlying graph. We took the underlying graph structure   and computed the graph Laplacian eigenvectors and its eigenvalues. We then used each eigenvector as a graph signal and computed the graph permutation contrasts for each eigenvector. Note the underlying graph is completely unrelated to the graph signal itself and thus provides a general benchmark for the relationship of eigenvectors/eigenvalues of graphs with the ordinal contrasts. The results are displayed in Figure \ref{fig:tempeigen}.

\begin{figure}[h]
    \centering
    \includegraphics[width=0.5\textwidth]{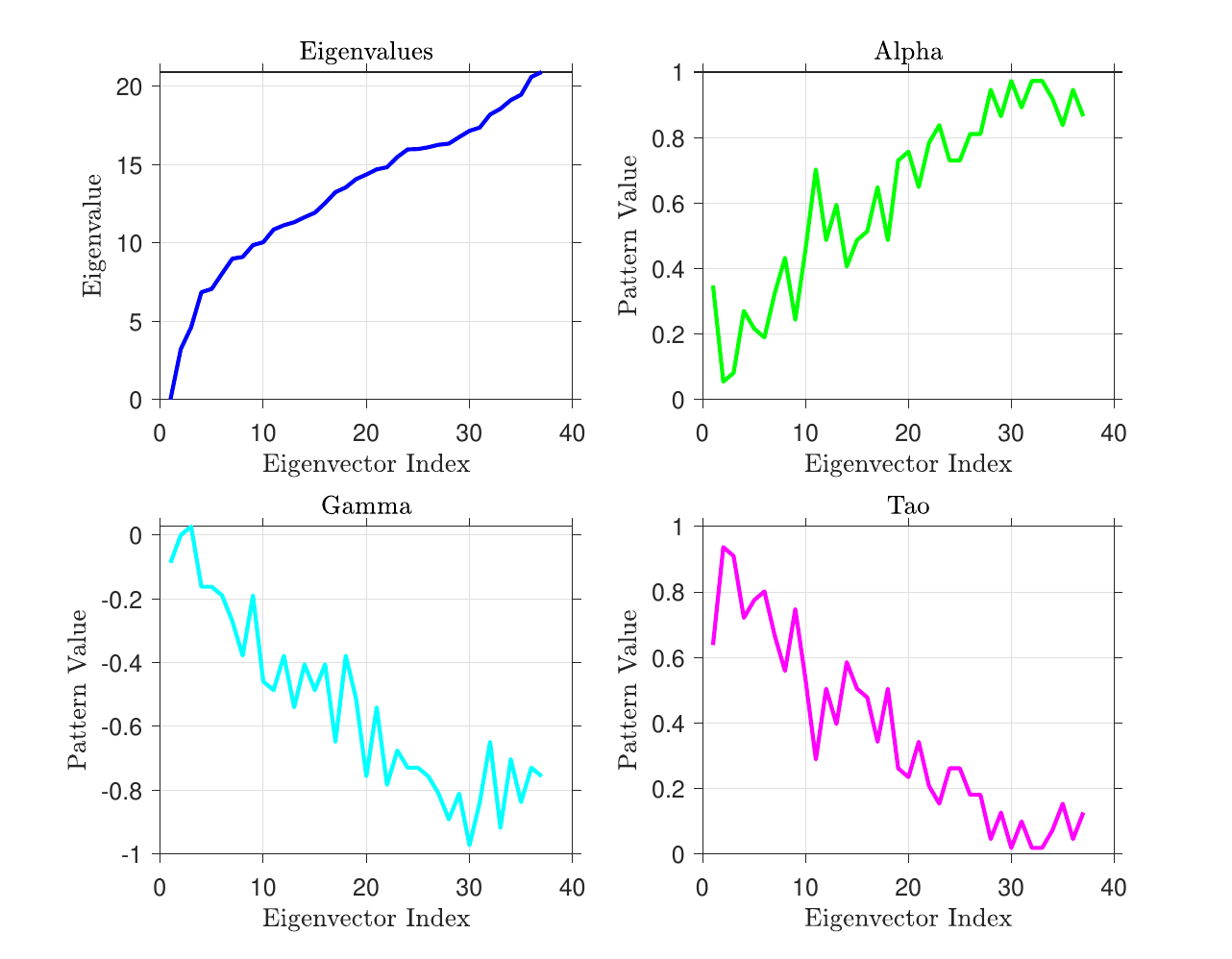}
    \caption{Eigenvalues of the Temperature Graph compared with the behaviour of the ordinal contrasts when we use each eigenvector as a graph signal}
    \label{fig:tempeigen}
\end{figure}

As hypothesized earlier in this work when we first introduced the ordinal contrasts (Methods Section C), the eigenvectors corresponding to larger eigenvalues seem to produce higher values of $\alpha$. The persistence $\tau$ is loosely related to the classical auto-correlation and can also be considered as a measure of smoothness as it differentiates between straight and broken patterns. It is also sometimes defined as $ 2/3- \alpha$ which is the negative of the turning rate plus constant 2/3 which is chosen such that $\tau$ is zero for white-noise (similarly for the other contrasts). In Figure \ref{fig:tempeigen}, we observe a clear negative correlation to the eigenvalues and intuitively to $\alpha$. While graph Laplacian eigenvalues are completely determined by the topology of the underlying graph, the $\alpha$  contrast takes both the actual signal values and underlying graph structure into consideration. This can provide a better representation of frequencies of graph signals.

Following this initial exploration, we assessed if we can detect changes in temperature at different time-periods. Spikes in temperature are an example of real life change-points. We noticed consistent seemingly parabolic behaviour in temperatures during the day hours compared to more linear behaviour in the night hours seen in Figure ~\ref{fig:temphours}.

\begin{figure}[h]
    \centering

    \includegraphics[width=0.5\textwidth]{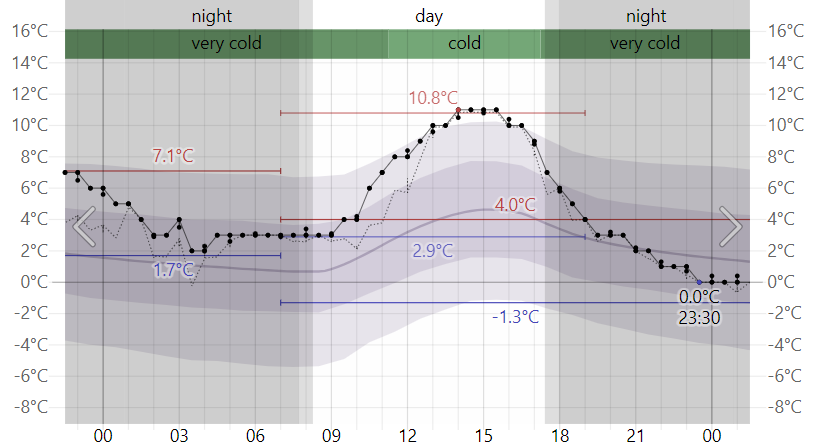}
    \caption{Temperature over hours on Friday, 17 January 2014 in Burgundy.}
    \label{fig:temphours}
\end{figure}

Following this, we extracted the graph signals for hours 12am-6am and for 9am-3pm the hours. We computed the $CPE_G$  and corresponding continuous ordinal values for all the hours for night and day during January, 2014. We computed the Wilcox on rank-sum test between groups and we used the hyperbolic tangent and exponential functions as our ordinal activation
functions and compared it to the discrete case.

\begin{table}[H]
\centering
\resizebox{\columnwidth}{!}{%
\begin{tabular}{|c|c|c|c|}
\hline
 & Discrete & Exponential & Hyperbolic Tangent \\
\hline
$PE_G$ & 0.20 & 0.0043 & 0.73 \\
\hline
$\alpha$ & 0.055 & 0.0033 & 0.47 \\
\hline
$\beta$ & 0.75 & 0.012 & 0.042 \\
\hline
$\gamma$ & 0.88 & 0.39 & 0.088 \\
\hline
$\tau$ & 0.055 & 0.0033 & 0.47 \\
\hline
$\delta$ & 0.00010 & 0.00090 & 0.00050 \\
\hline
\end{tabular}%
}
\caption{p-values for the Wilcoxon rank-sum test comparing night (12am-6am) and day (9am-3pm) groups in January 2014 using $PE_G$ (Discrete),  and $CPE_G$ with the exponential and hyperbolic tangent ordinal activation functions.}

\label{tab:temp}
\end{table}

 Table \ref{tab:temp} shows our results. The exponential ordinal activation function performs the best in terms of $PE_G$, picking up the more rapid increases in temperature very well. In fact, all of its contrasts, aside from $\gamma$, picked up the difference. In the discrete case, only the  $\rho$ contrast detected the spike, while in the hyperbolic tangent case the $\delta$ and $\beta$ contrasts picked up the spike. Additionally, it seems the $\beta$ contrasts performance improves with ordinal activation functions with larger gradients. Conversely, the $\delta$ contrast appears to detect the change regardless of the underlying ordinal activation function.

We have shown, in a real life undirected graph signal, how $CPE_G$ can be used to detect differences between day and night be detecting spikes in temperatures spatially distributed over power stations. We can see that the underlying activation function can influence $CPE_G$'s performance and can provide significant advantages to the discrete $PE_G$.

\begin{figure*}[htb!]
    \centering
    \includegraphics[width=\textwidth]{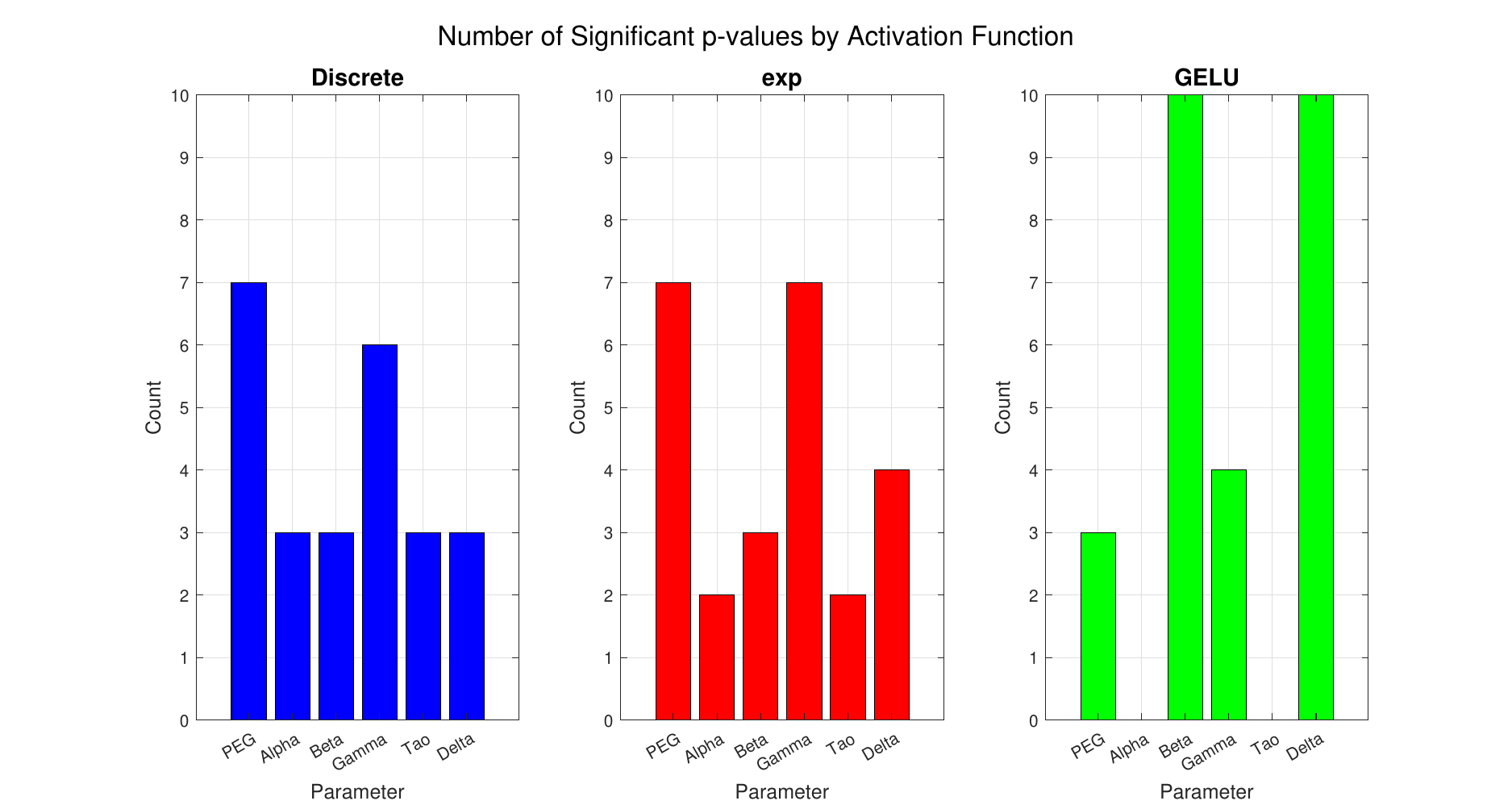}
    \caption{Number of significant p-values when testing Surrogate vs Normal data}
    \label{fig:sur}
\end{figure*}

\subsection{Heartbeat time-series}

The Fantasia database consists of 10 heartbeat time series: 5 correspond to young subjects (aged between 21 and 34 years) while 5 correspond to elderly subjects (aged between 68 and 85). It is a commonly used dataset to test entropy algorithms \cite{heartbeat,heartbeat1}. Each time-series consists of 4800 time-points. We split it up into 6 disjoint windows of 800 points and compute the $(C)PE_G$ and the corresponding contrasts  on the undirected path graph with 800 nodes.

Pirondoni et al.~\cite{Pirondini16} introduced a method to produce surrogate data specifically for graph signals using the Graph Fourier transform by taking the eigen-decomposition of the Graph Laplacian. We will apply this to create surrogate signals to test $CPE_G$'s ability to detect non-linearity in graph signals. 

Given the Laplacian of an undirected  graph is symmetric and real valued, it has a complete set of orthonormal eigenvectors: $V = [\textbf{v}_l]_{l=0,1,\ldots,N-1}$. The Graph Fourier Transform (GFT) coefficients of a graph signal $x$ are obtained by the projection $c = V^T \textbf{x}$. The inverse transform corresponds to $\textbf{x} = Vc$.

To generate the surrogate data \cite{Pirondini16}, we first take the GFT of the graph signal to obtain the coefficients $c$. The signs of $c$ are then randomly permuted. This preserves the amplitude of the GFT coefficients while destroying the non-linearity contained in the phase of the original signal. The inverse GFT is used to retrieve the surrogate signal $x$.

Then, we compute the $CPE_G$ on both the surrogate and normal graph signals form the heartbeat data. We perform the non-parametric Wilcox on Rank-sum test for significance between each patients signal and its surrogate. We compute the $CPE_G$  and the continuous contrasts with the exponential ordinal activation function for all 6 windows over all participants and evaluate if they can distinguish between the respective participant signal and its surrogate.  We repeat this for the discrete case.

 Figure \ref{fig:sur} illustrates the number of significant p-values identified by the Discrete PEG and its contrasts, as well as by $CPE_G$ and its contrasts, using the GELU and exponential ordinal activation functions. The results indicate a similar number of significant differences detected in both the discrete and exponential cases, with the $\gamma$  and $\delta$ contrasts outperforming the discrete contrasts, while the $\alpha$ and $\tau$ contrasts underperform. Notably, with the GELU function, the  $\beta$ and   $\delta$ contrasts consistently detect significant differences across all patients, whereas the $\alpha$  and $\tau$  contrasts detect none, and the gamma contrast detects fewer differences. This suggests that the GELU function with the $\beta$  and $\delta$  contrasts being the most effective in detecting non-linearity in graph signals, at least for the HRV data.

This deviation from the discrete case aligns with the previously presented K-L divergence plot. The analysis demonstrates that adjusting the ordinal activation function can be advantageous in specific contexts, essentially fine-tuning the gradient of the strictly monotonically increasing function to optimize the analysis. Gradient optimization methods such as Gradient Descent or Adam\cite{adam}, commonly used in deep learning during back propagation, could be potentially integrated and employed for this purpose.

%To explore this a bit further we decided to take inspiration from Bian et al. \cite{Bian12} we propose a slightly modified version of $CPE_G$ where all possible pattern length 3 ties are assigned to separate classes. This is due to the presence of repeated values in low resolution and Heart-rate variability data. This can significantly skew results if ties are not considered. Making this simple change we re-compute the FDR-corrected P-values.

%\begin{figure}[htbp]
 % \centering
  %\includegraphics[width=0.5\textwidth]{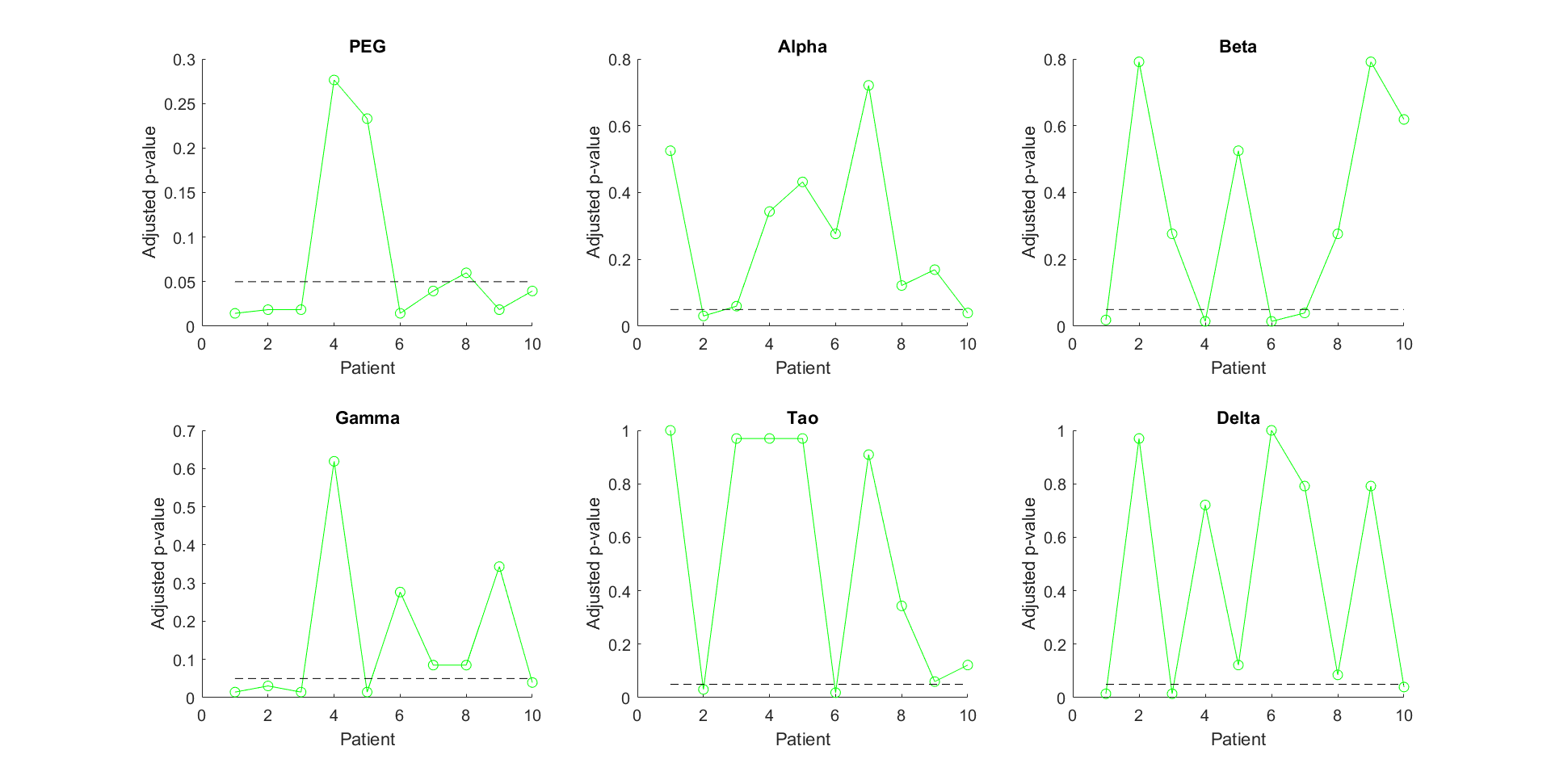} 
  %\caption{$CPE_G$ Normal vs Surrogate signals}
  %\label{fig:example}
%\end{figure}

%Figure 5 shows that this simple change does infarct change results. Note we did not observe any change in the Logistic map when making this modification. While the interpretation of this is well beyond the scope of this study, it does show that not accounting for ties can result in skewed results.

\subsection{The Kylberg–Sintorn rotation dataset of textures: A Graph Pattern Contrast Perspective }

The Kylberg-Sintorn rotation dataset \cite{Kylberg16} contains  various textures from bulk solids and regular structures. We extract 10 images from each rotation ranging from 0-320 degrees in increments of 40 degrees. Each image has 122x122 pixels with gray values normalized with a mean of 127 and a standard deviation of 40. We used the hardware rotation, which is performed by turning the camera. We used the images of rice, lentils and fabric 5 from the dataset(Figure \ref{fig:texture}). Lentils and rice are isotropic images while fabric 5 has horizontal and vertical lines that may make the image less rotation-invariant. For each picture, we used our $PE_G$ image algorithm with $m=4$, and a regular  grid graph and embedding delay $L=1$. Note that we could have used a larger embedding delay but since the image is quite small we employed the minimum value. We plotted the results on the $\theta-\kappa$ plane, which can be interpreted similar to  the classical entropy ($\theta$)-complexity ($\kappa$) plane.

\begin{figure}[htb!]
    \centering
    
    \begin{subfigure}{0.4\textwidth} 
        \centering
        \includegraphics[width=\linewidth]{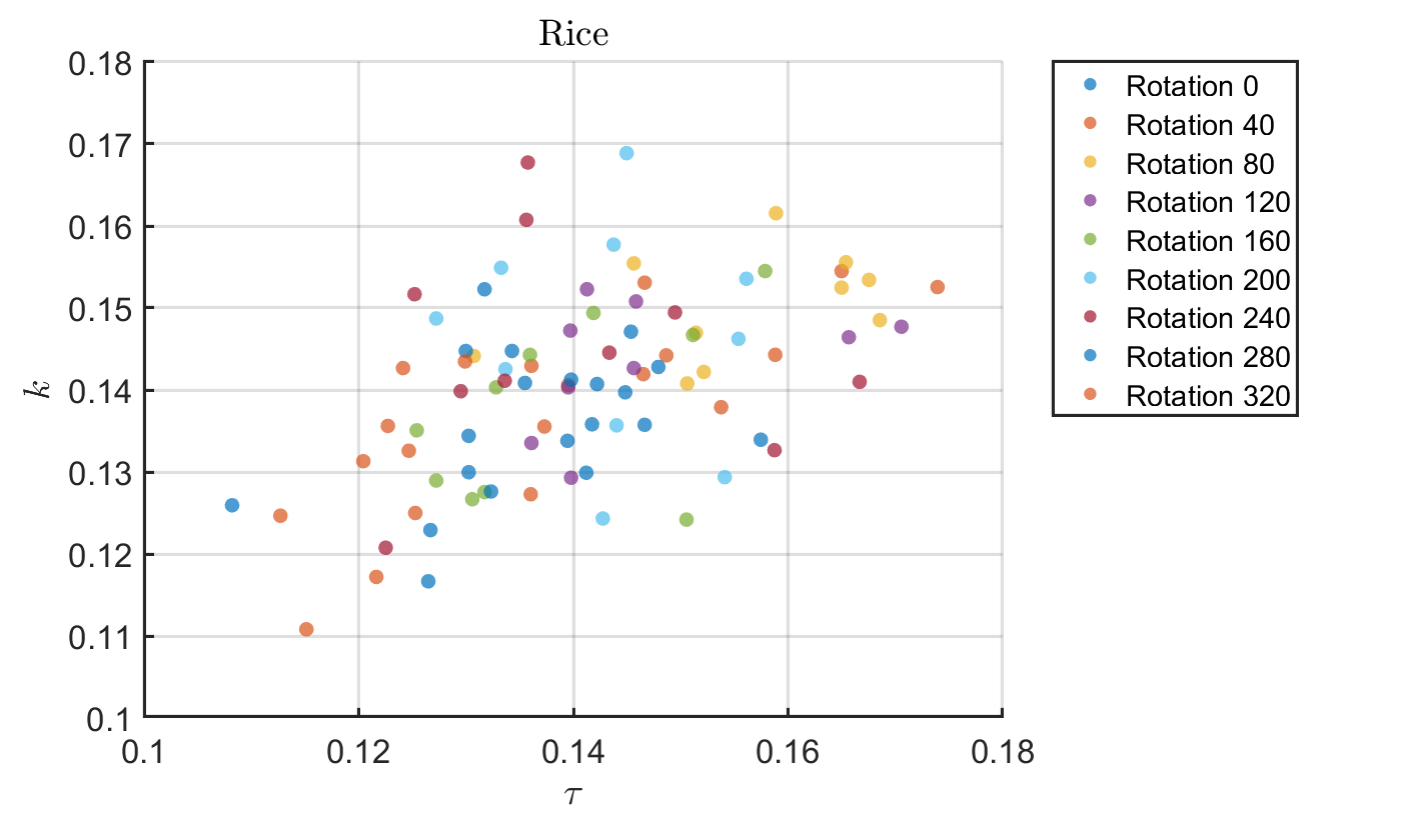}
        \caption{Rice}
        \label{fig:sub1}
    \end{subfigure}
    \hfill
    \begin{subfigure}{0.4\textwidth} 
        \centering
        \includegraphics[width=\linewidth]{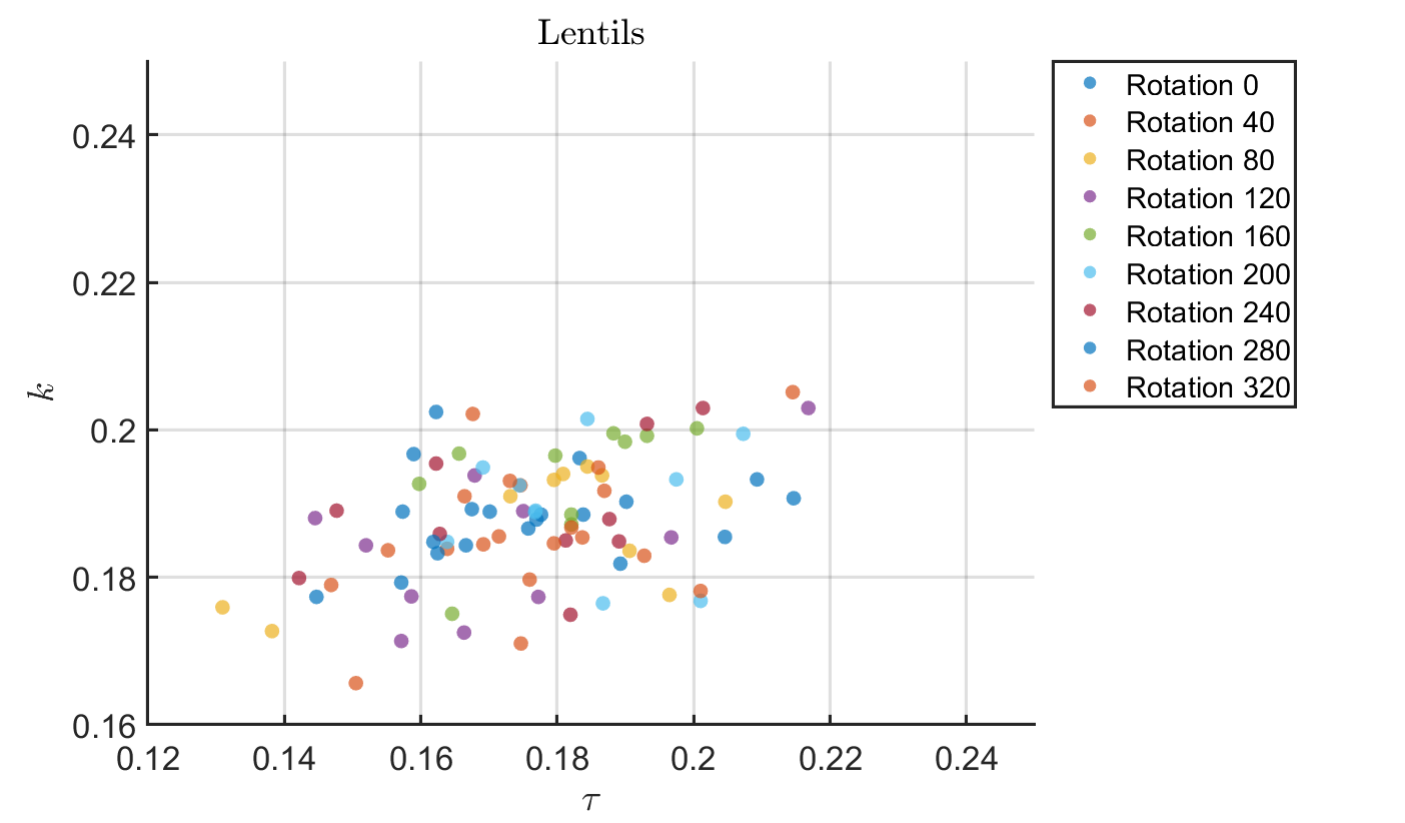}
        \caption{Lentils}
        \label{fig:sub2}
    \end{subfigure}
    
    %\vspace{0.5cm} 
    
    \begin{subfigure}{0.4\textwidth} 
        \centering
        \includegraphics[width=\linewidth]{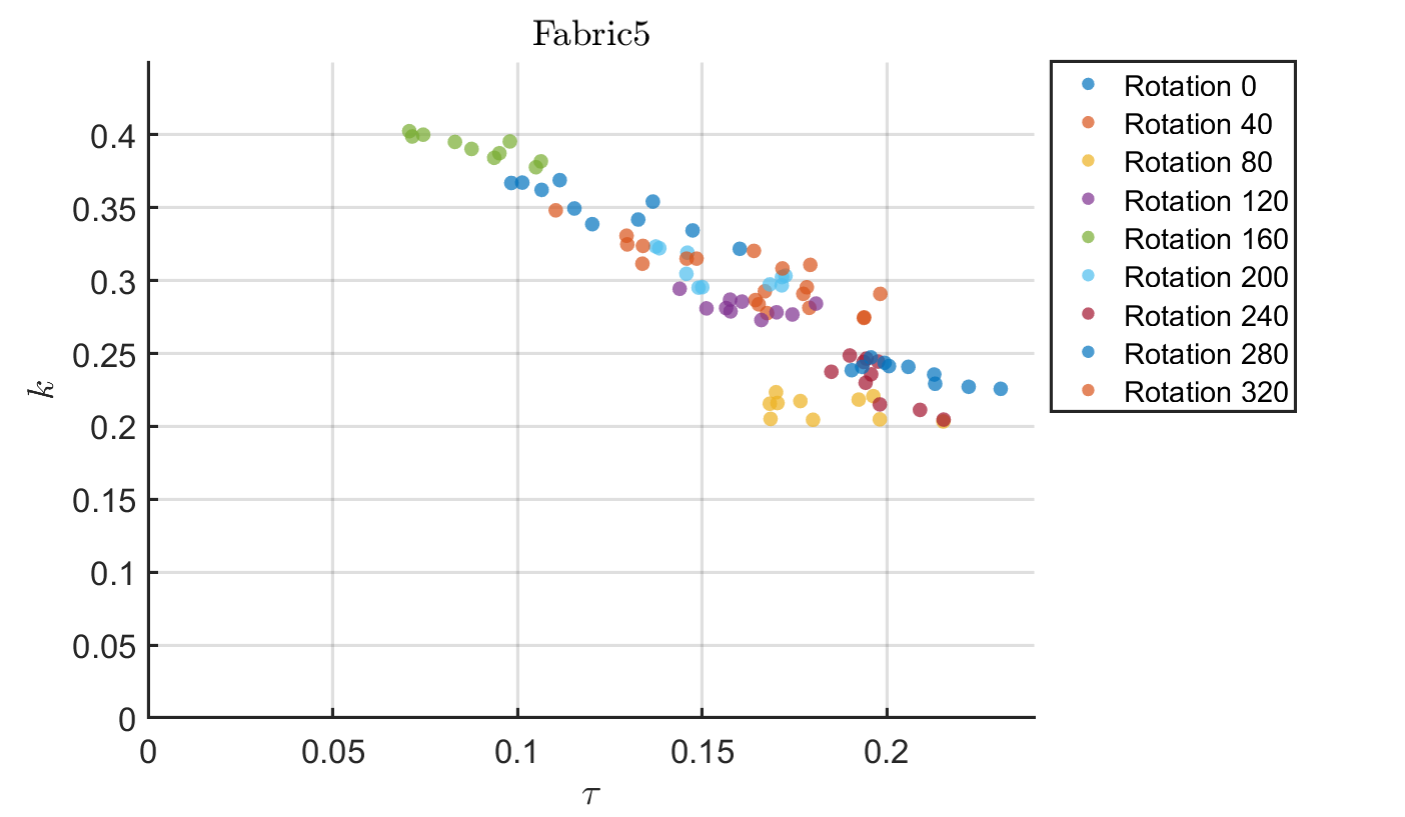}
        \caption{Fabric5}
        \label{fig:sub3}
    \end{subfigure}
    
    \caption{The Textures on the $\theta-\kappa$ plane for various rotation values.}
    \label{fig:vertical}
\end{figure}

 It is interesting to note (in Figure \ref{fig:vertical}) that for the isotropic Rice and lentils textures, all the images have similar $\theta$ and $\kappa$ values. The range of the values for lentils are $\kappa \in [0.16,0.18]$ while for theta $\in [0.13,0.22]$. For the rice textures, there are $\kappa \in [0.11,0.17]$ while for $\theta \in [0.11,0.175]$. As we can see, these values are similar in both textures and encompass a relatively small range of values. Visually, we observe these results in the detection of the texture regardless of rotation, i.e., rotation invariant detection. The distinguishing feature of the two textures is the value of $\theta$. Recall that $\theta$ is similar to persistence in the length-3 pattern case, where we would also expect it to be positive in practice with the maximum ($2/3$) representing a monotonic function and minimum ($-1/3$) a highly oscillating function. It is intuitive that lentils and rice are close to the middle of this range as they are somewhat monotonic depending on the light in the picture (i.e., lighter gray scale values have higher pixel values). However, they are also oscillating slightly between light and dark gray values. $\kappa$ describes the existence of branching or spiral like structures in the image. This is not readily present in the rice or lentil textures resulting in the value remaining low. The main distinction occurs in fabric5. Whereas with rice and lentils, all the rotations of the images are part of the same cluster. It is evident that this is not the case with Fabric5, which is due to the image not being isotropic. We can see for example rotation 180 and 0 are close to each other as expected when vertically flipping the image. However, there are big gaps between the other 40 degree rotations, i.e., rotation 80 is in  a clearly distinct cluster from rotation 180. We note higher values for $\kappa \in [02,0.41]$ as expected due to the clear branching structure in the image. Furthermore, $\theta \in[0.06,0.23]$ has a distinctly larger range due to the image not being isotropic. Overall, our $PE_G$ Image patterns translate well to the graph domain with the benefit being that we also get a pixel by pixel granularity and can change the underlying graph to emphasize certain properties of the image.

\begin{figure}[H]
    \centering
    
    \begin{subfigure}{0.15\textwidth} 
        \centering
        \includegraphics[width=\linewidth]{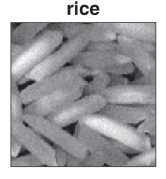}
       
        \label{fig:sub1}
    \end{subfigure}
    \hfill
    \begin{subfigure}{0.15\textwidth} 
        \centering
        \includegraphics[width=\linewidth]{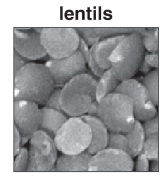}
       
        \label{fig:sub2}
    \end{subfigure}
    
    \vspace{0.5cm} 
    
    \begin{subfigure}{0.15\textwidth}
        \centering
        \includegraphics[width=\linewidth]{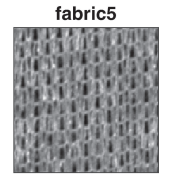}
       
        \label{fig:sub3}
    \end{subfigure}
    
    \caption{Textures from the  Kylberg–Sintorn rotation dataset.}
    \label{fig:texture}
\end{figure}

\subsection{Analyzing Fractal Surfaces using Graph Image Contrasts}

Fractal surfaces can be produced using a stochastic algorithm called the Random midpoint displacement algorithm \cite {midpoint}. In essence, the algorithm calculates the midpoints of existing elements and adds random number from a given distribution to them. The algorithm allows us to specify the size of the fractal as an image and also allows a roughness value between 0 and 1 that determines the roughness of the generated fractal surface (with 0 being the least rough and 1 being the most). We generated 100 surfaces each for roughness values between 0 and 1. Each fractal surface was of size $33 \times 33$. We used, once again, a regular  grid graph as our underlying graph structure. We also varied the delay  component $L$ of our algorithm from 1 to 6 to see if that influenced results. We plotted each value in the $ \theta- \kappa$ plane as before.

\begin{figure*}[htb!] 
    \centering
    \includegraphics[width=\textwidth]{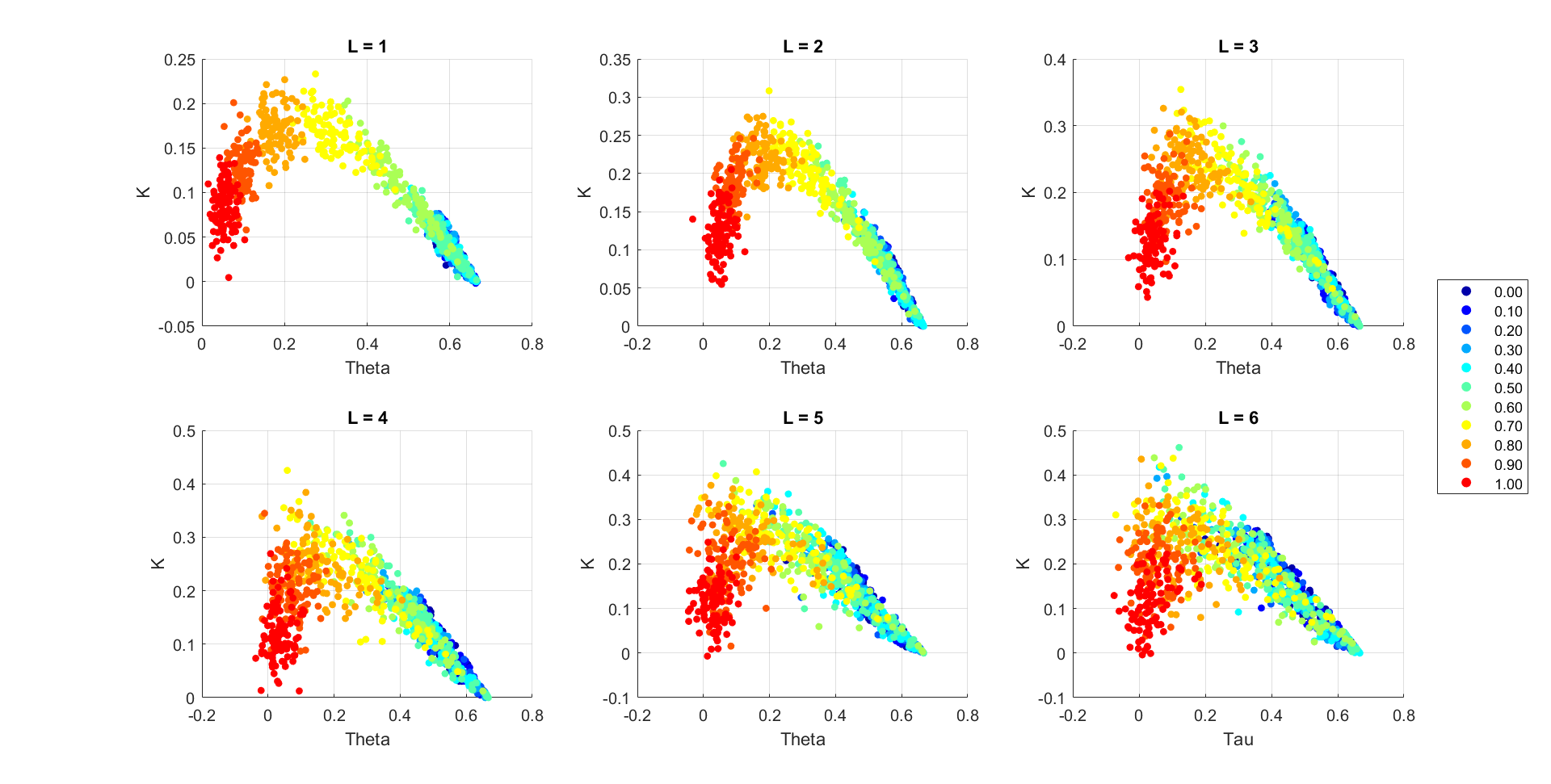}
    \caption{Random Midpoint Displacement Fractal (RMDF) in the $\theta- \kappa$ plane for different values of roughness and $L$.}
    \label{fig:rmdf}
\end{figure*}

Figure \ref{fig:rmdf} shows the Random Midpoint Displacement Fractal (RMDF) in the $\theta - \kappa$ plane and \ref{fig:fsp} shows the surface plots of the fractals at different values of R.
We varied the delay $L$ and the roughness parameter. For all delays the plot displays a structure resembling a convex negative parabola. There are clusters for different values of roughness and we notice that as $L$ increases, the clusters seem to intersect more. However, the main parabolic structure is still maintained regardless of the value of $L$. We can see when the roughness is near 0, $\theta$ approaches its maximum value of $2/3$. On the other hand, $\kappa$ is lower for roughness values near 0. Following our interpretation, this suggests that there are less branching or spiral like structures in the fractal surface. The maximum of the parabola occurs in the roughness range $[0.7,0.8]$. Here $\kappa$ is at a maximum. If we consider $\kappa$ as complexity, we can see that the complexity is at a peak where there is a majority of rough structures but still some smoother areas in the fractal surface. $\kappa$ drops as we approach a roughness of 1 and $\theta$ drops monotonically as the roughness increases. This behaviour is expected, as when roughness=1 we have largely irregular surfaces which is akin to strong, frequent oscillations or increases in the relative number of turning points compared to monotonic patterns.

\begin{figure*}[htb!]
    \centering
    \includegraphics[width=1\linewidth]{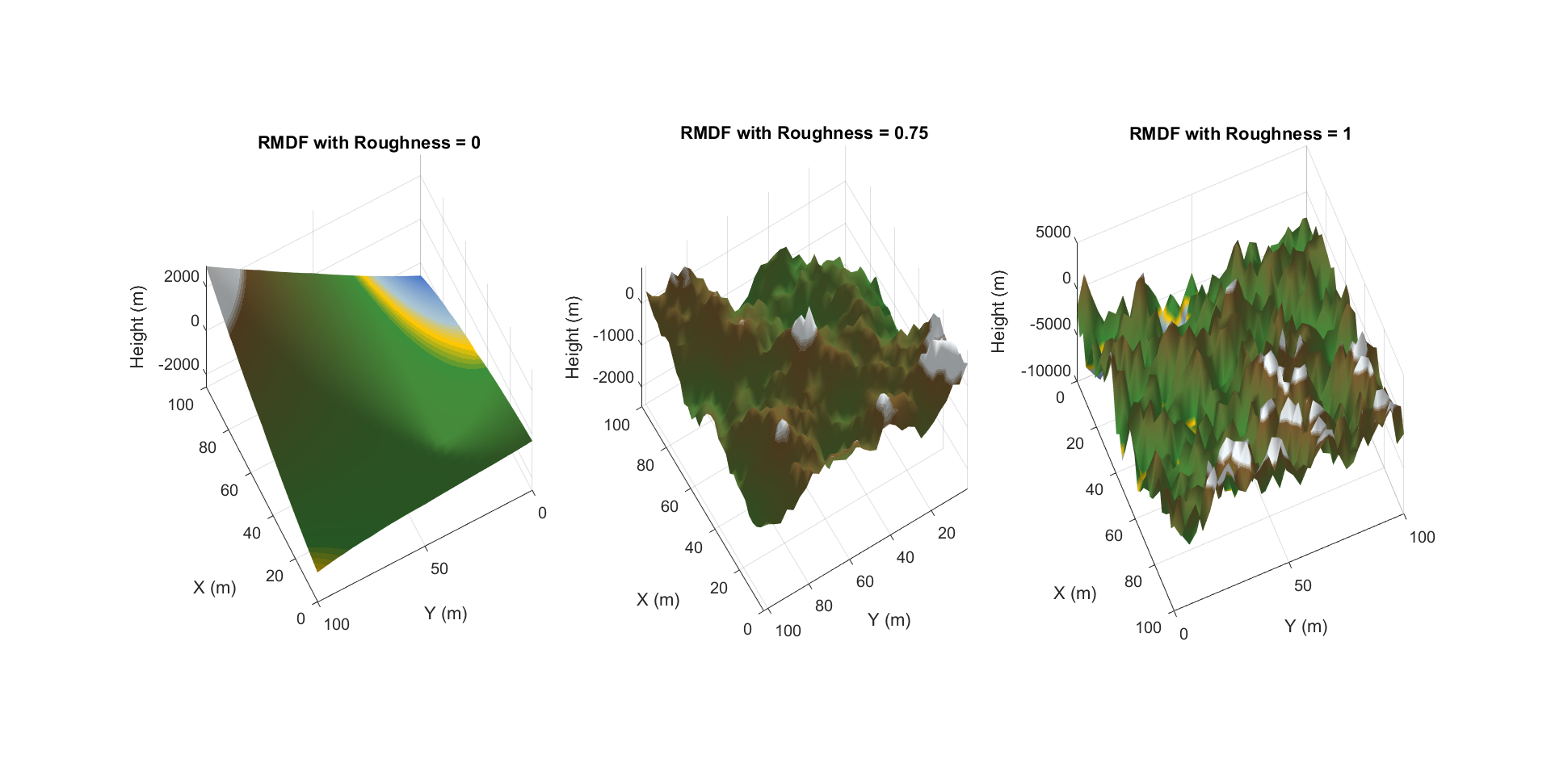} 
    \caption{Fractal Surface Plots at different values of Roughness (R).}
    \label{fig:fsp}
\end{figure*}

 Figure \ref{fig:ec} shows a rough morphology  of  the Jensen–Shannon complexity vs normalized entropy plane introduced by Rosso et al.~\cite {rosso}. We can see nearly identical behaviour of the plane with our image contrasts, this confirms that our graph based approach to the ordinal contrasts introduced by Zanin can distinguish between stochastic or chaotic behaviour and less complex behaviour in the analysis of fractal surfaces. At extremes of values of $R$ we can note low values of $\kappa$ correspond with either periodic behaviour of the fractal surface near $R=1$ or `white-noise' like behaviour near $R=0$. Chaotic and Stochastic behaviour, being notoriously hard to distinguish between is seen near $R=0.75$, we can see at values of $R=0.7,0.8$ the points are overlapping the chaotic and stochastic areas of the entropy-complexity plot (Figure \ref{fig:ec}). At $L=4$ we can see some distinguishing behaviour with $R=0.7$ showing more chaotic behaviour and $R=0.8$ showing more stochastic behaviour.

\begin{figure}[H]
    \centering
    \includegraphics[width=1\linewidth]{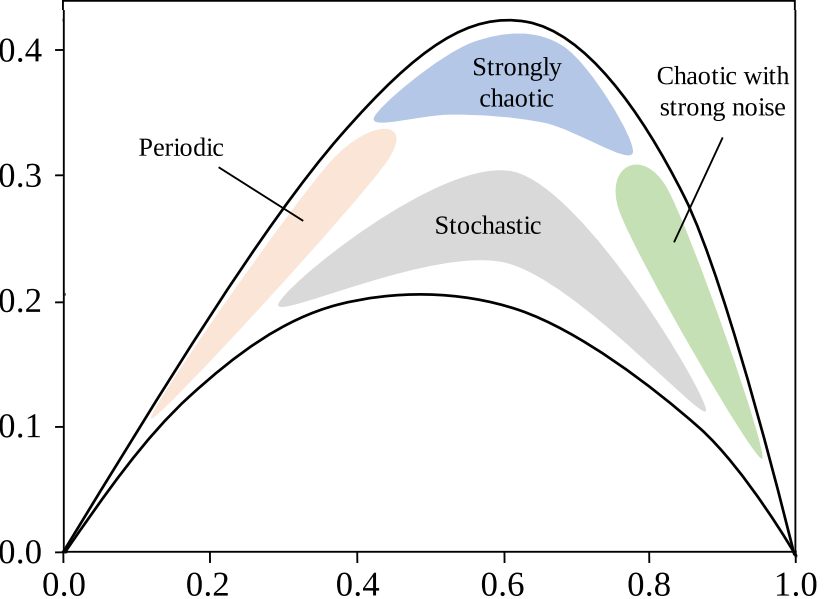} 
    \caption{Entropy-Complexity plane from Figure 5 of \cite {minwoo}.}
    \label{fig:ec}
\end{figure}

\subsection{$CPE_G$ as a method to characterize regional brain changes in Mild Cognitive Impairment}

Alzheimer's disease (AD) is the leading cause of dementia characterized by a substantial and progressive decline in cognition~\cite{alz}. It affects over 50 million people world wide, expected to triple by 2050, and places an immense burden on families and healthcare services. AD often progresses from stages of amnestic (memory-related) Mild Cognitive Impairment (aMCI), where the predominant symptom is memory loss greater than what is expected in natural aging~\cite{mci}. We explore the progression of aMCI using $CPE_G$, following a similar methodology presented in~\cite{john24} which explored $PE_G$ for the same data.

Participants were selected from this longitudinal study~\cite{mario}, which included patients with early Mild Cognitive Impairment (eMCI), MCI, and Mild Cognitive Impairment converters (MCIc). MCIc were patients who presented with MCI which converted to AD after a 2-year follow up. From this study, 8 healthy controls (Age: $76.50\pm5.21$, Sex: 2M; 6F), 7 eMCI (Age: $76.86\pm6.41$, Sex: 4M; 3F), 10 MCI (Age: $72.30\pm5.64$, Sex: 5M; 5F), and 6 MCIc subjects (Age: $76.33\pm5.09$, Sex: 4M; 2F) were selected. These subjects underwent functional magnetic resonance imaging (fMRI) scanning during which they performed a visual short-term memory binding task (VSTMBT) and diffusion tensor imaging (DTI). 

The VSTMBT is a task sensitive to early changes in AD targeting the retention of coloured shapes in memory~\cite{brain}. Participants are presented a set of non-nameable coloured shapes for 2 seconds (encoding phase), must memorize these over a variable window (2-8 seconds), and then are presented a new set of coloured shapes for 4 seconds. They must then determine if the new set of coloured shapes are the same or different, before the task is repeated after an inter-trial interval. In this study, as in~\cite{john24}, we explore the encoding phase of the task to assess changes in the formation of memory over the stages of MCI.

fMRI and DTI pre-processing are detailed in~\cite{john24}. Of note, regions of the brain for both fMRI and DTI were segmented with a modified version of the Desikan-Killiany atlas described in~\cite{nimage}, along with the brain-stem. This resulted in 85 regions, which were used to construct our brain signals and networks.

We explored the mean fMRI signal for the encoding phase of our task at each of our 85 brain regions. The underlying topology, or graph, is determined by our DTI networks (white-matter streamline density between brain regions). Prior to applying $CPE_G$, we use $z$-score normalization across the nodes to account for large values in the fMRI signal interacting poorly with our activation function.

Using the fMRI signals and DTI graph, we calculate $CPE_G$, as in step 6 of Definition~\ref{def:$CPE_G$}, to obtain a continuous value and the classical $PE_G$ pattern at each node for $m=3$. This is computed with the mean as the  statistic ($T$) and the exponential activation function. 

    For each comparison of control and disease, we calculate the relative, activation function weighted, frequency for each node across both groups using their ordinal pattern and continuous value. This is used to find per node statistical differences in control and disease using a standard t-test. Our hypothesis was that changes in amplitude, captured by the nodal continuous values of $CPE_G$, would highlight additional changes due to disease not captured by traditional $PE_G$'s permutation patterns alone. In addition, we asses the stability of the resulting t-test $p$-value by randomly permuting control and disease groups 1000 times, and observing the cases where our original $p$-value is smaller than the permuted case ($p'$). Given the limited sample size of our cohort, we chose a conservative approach and report regions where $p,p'\leq0.5$ as seen in Table~\ref{tab:MCI}.

\begin{table}[htb!]
\begin{tabular}{llll}
\textbf{Control vs.} & \textbf{ROIs}              & \textbf{$p$-value} & \textbf{$p<p'$} \\ \hline
\textbf{eMCI}        & Right-lateralorbitofrontal & 0.016              & 0               \\
                     & \textbf{Right-parstriangularis }    & 0.019              & 0.009           \\
                     & \textbf{Right-caudate}              & 0.022              & 0.006           \\
                     & \textbf{Left-thalamus}              & 0.033              & 0.005           \\ \hline
\textbf{MCI}         & Right-entorhinal           & 0.012              & 0.002           \\
                     & Right-lateralorbitofrontal & 0.018              & 0.002           \\
                     & Right-parahippocampal      & 0.027              & 0.015           \\
                     & \textbf{Right-postcentral }         & 0.036              & 0.005           \\ \hline
\textbf{MCIc}        & Right-lateralorbitofrontal & 0.002              & 0               \\
                     & Left-hippocampus           & 0.012              & 0.003           \\
                     & Right-paracentral          & 0.014              & 0               \\
                     & \textbf{Left-inferiortemporal}      & 0.015              & 0.005           \\
                     & Left-medialorbitofrontal   & 0.031              & 0.004           \\
                     & \textbf{Left-lingual}               & 0.031              & 0.003           \\
                     & \textbf{Right-isthmuscingulate}     & 0.032              & 0               \\
                     & \textbf{Left-thalamus}              & 0.033              & 0               \\
                     & Left-caudalmiddlefrontal   & 0.037              & 0.002          
\end{tabular}
\caption{Statistical tests comparing controls to the various stages of MCI. Regions in bold are new regions identified using relative frequency calculated from $CPE_G$. Non-bold regions were additionally identified by $CPE_G$, while also appearing in the preceding $PE_G$ study~\cite{john24}.}
\label{tab:MCI}
\end{table}

We find that changes in the continuous ordinal pattern distribution, calculated with $CPE_G$, captured not only the same regions across the disease stages as previously found in $PE_G$~\cite{john24} but also additional regions in all three stages (3 in eMCI, 1 in MCI, and 4 in MCIc), shown in bold in Table~\ref{tab:MCI}. In~\cite{john24}, $PE_G$ identified regions consistent with the anatomical trajectory of the AD continuum. $CPE_G$ captures these same regions along with additional changes in the basal ganglia (thalamus, caudate), inferior frontal and temporal gyruses (pars triangularis, inferior temporal), lateral parietal lobe (post central), and cingulate gyrus (isthmus cingulate). 

Looking at Table~\ref{tab:MCI} across the stages of MCI, three distinct neural pathways emerge related to functions of the VSTMBT and AD related neurocognitive deficits.

In eMCI, we observe changes in $CPE_G$ for regions along the sensory to subcortical to cortical pathway, involving visual inputs, the basal ganglia, thalamus, and motor language areas of the frontal lobes. The exact function of this pathway remains unclear, but it may be linked to strategy development in response to the early stages of disease, and possibly indicating a compensatory process \cite{parra13}. This pathway, associated with a broad range of cognitive and sensorimotor functions, including language \cite{klos23}, could signal the potential engagement of compensatory strategies to encode materials verbally and attempt to rehearse them, although such strategies are unlikely to support task performance effectively. These strategies may become less apparent or unavailable as the disease progresses to more advanced stages (see also \cite{valdes20,parra14} for frontal and basal ganglia involvement, respectively).

Patients with MCI exhibited a clear change in regions involved in the memory pathway, aligning with current notions of medial temporal lobe involvement in context-free memory tasks (VSTMBT) and in early AD progression~\cite{didic11,parra22}. This network supports inputs from the visual ventral stream into the medial temporal lobe (anterior to the posterior network), where object identity (anterior MTL) \cite{star10,zach14} is processed and fed into the associative network (posterior MTL) to form episodic memories \cite{mayes07}.

Patients with more advanced MCI (MCIc) exhibited changes in $CPE_G$ in both of the previously identified pathways in the earlier stages of disease. Additionally, for MCIc, we identified regions that connect the previously discussed pathways to neocortical areas of the brain. These regions may support the propagation and spread of neuropathology and associated neurocognitive deficits. This finding is particularly interesting and potentially novel, highlighting the strength of the amplitude-aware CPEG approach. This approach reveals regions connecting the previously mentioned areas to other neocortical areas. The non-linear nature of CPEG, with its ability to detect spikes, may increase sensitivity to detect the propagation of neuropathology, which is expected in MCI patients who are likely to develop dementia \cite{insel20}.

To further illustrate these results, we generate brain graphs in Figure\ref{fig:braingraphs}. Labeled nodes are those present in Table \ref{tab:MCI}. Nodes in orange represent a change in dominant pattern, defined as a difference in pattern that exists in at least half of the subjects. Blue nodes had no change in dominant pattern, while black nodes exhibited no definitive change. One interesting result to note is that of the right isthmus cingulate (rIST in Figure\ref{fig:braingraphs}c). The rIST exhibited no change in the discrete dominant pattern, but had a significant change in weighted frequency due to the continuous value determined by the activation function. This highlights the utility of $CPE_G$. With no change in dominant pattern, $PE_G$ would not identify that region as statistically significant. However, due to $CPE_G$'s increased spike detection capabilities due to the exponential function combined with its amplitude-awareness, a consistent and significant (subtle) change in amplitude in controls vs.~MCIc was identified, despite no change in the discrete dominant pattern. \\ 

\begin{figure}[htb!]
    \centering
    
    \begin{subfigure}{0.2\textwidth} 
        \centering
        \includegraphics[width=\linewidth]{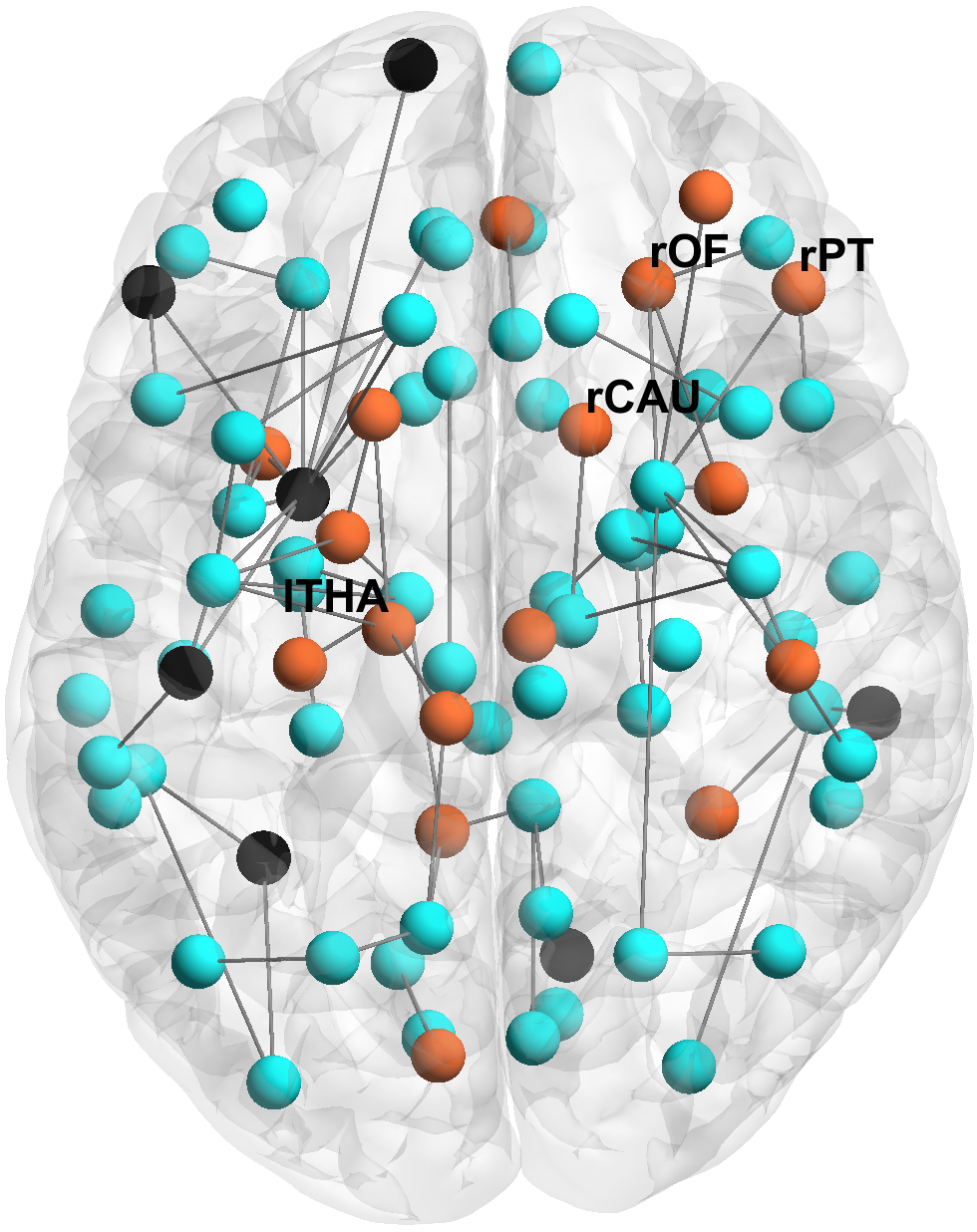}
 \centerline{\small (a) control vs.~eMCI}
    \end{subfigure}
    \hfill
    \begin{subfigure}{0.2\textwidth} 
        \centering
        \includegraphics[width=\linewidth]{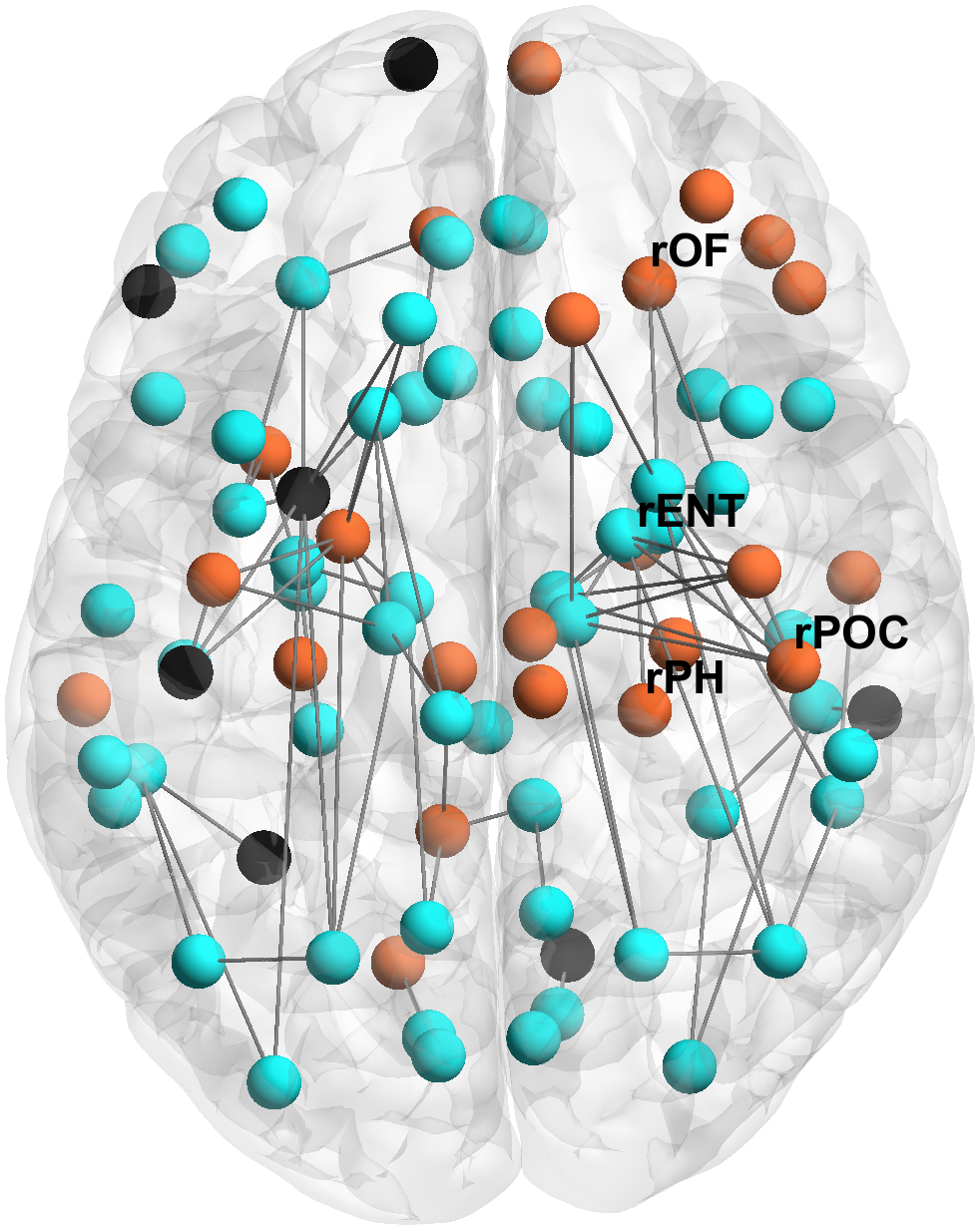}
 \centerline{\small (b) control vs.~MCI}
    \end{subfigure}
    
%    \vspace{0.5cm} 
    
    \begin{subfigure}{0.2\textwidth}
        \centering
        \includegraphics[width=\linewidth]{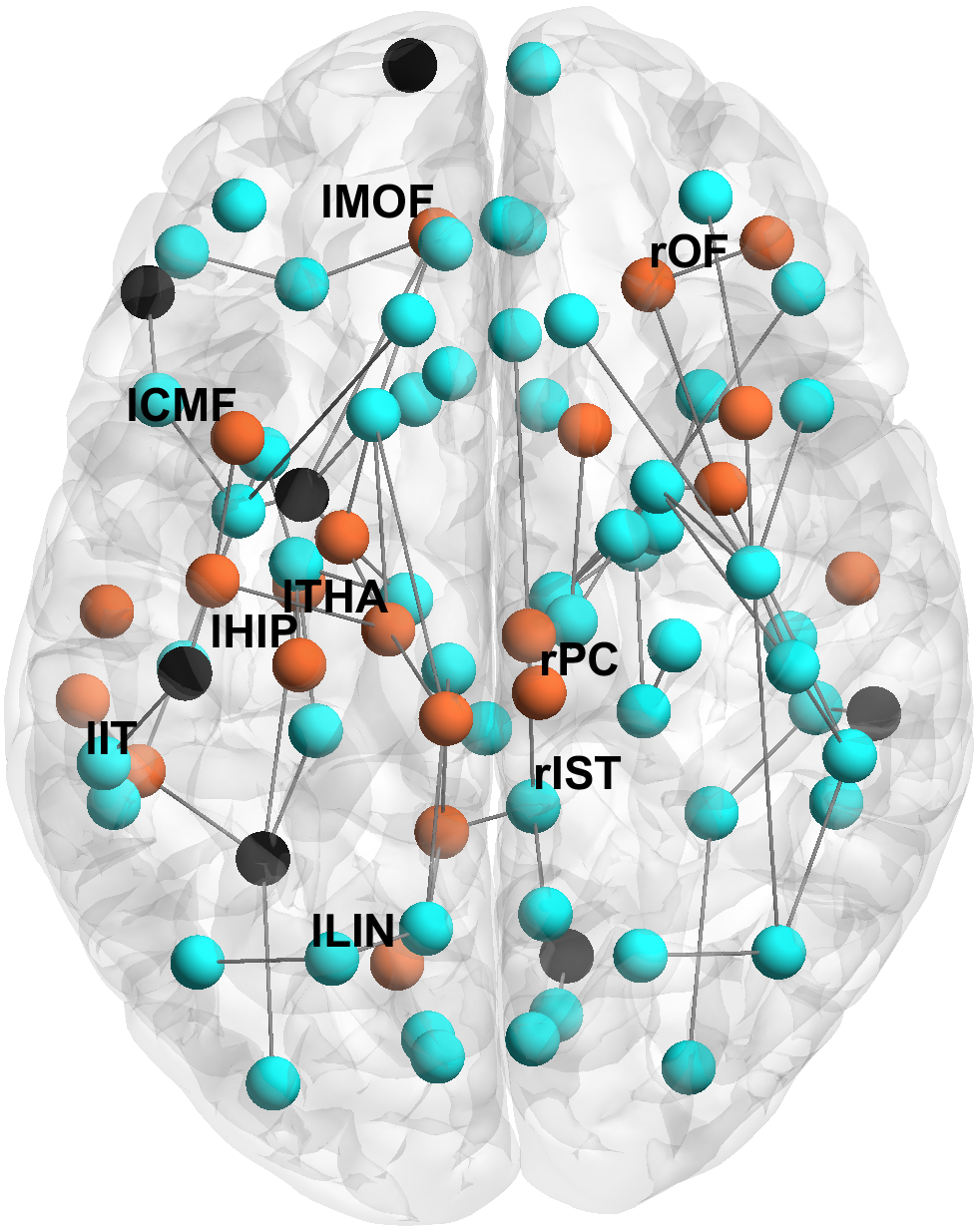}
\centerline{\small (c) control vs.~MCIc}
    \end{subfigure}
    
    \caption{Visualization of the changes in pattern between control and disease. Here, 2\% of the underlying graph's (DTI) edges are visualized for clarity. Brain graphs were generated with BrainNet viewer~\cite{bviewer}.}
    \label{fig:braingraphs}
\end{figure}
\newpage

\section{Discussion}

In this paper, we have comprehensively extended various cutting-edge, recent advancements in non-linear dynamic analysis, in particular in the analysis of Permutation Entropy and ordinal patterns. We have formally defined pattern contrasts for Graph Signals and have shown that they provide additional information compared to $PE_G$ alone. Furthermore, these contrasts are effective at detecting islands of stability in the presence of non-linear activity, showing its ability to recognize and discriminate non-linear activity from more stable behaviour. We also showed that the $\alpha$ and $\tau$ pattern contrasts are related to the topology of the graph and its Laplacian eigenvectors/eigenvalues.

We introduced an image specific extension of pattern contrasts to the graph domain by extending a quantization method to the graph domain to patterns of length 4, showing that this is akin to an analysis in the entropy-complexity plane. We also showed that our method is rotation-invariant for isotropic images and can distinguish between different types of images. Note that for all images we used a regular grid graph. Extending this further, we analyzed fractal surfaces generated by the Random Midpoint Displacement Algorithm. Our introduced image contrasts in the graph domain could distinguish different levels of roughness and showed parabolic like behaviour for incremental levels of roughness in the $\theta-\kappa$ plane. We showed that our interpretation of $\theta$ and $\kappa$ seemed to hold with our extension to the graph domain, and that no matter the embedding delay $L$ the global parabolic structure was maintained.

We developed a continuous version of Graph Permutation Entropy with a novel ordinal activation function. This strictly monotonically increasing function can emphasize certain features of a given signal (spikes) or suppress undesired ones (noise) while maintaining the ordinal structure of the embedding vector and keeping the overall number of each pattern permutation the same. We showed there are many ways we can optimize the activation permutation function and the  statistic we use to extract window information. This creates a link between $CPE_G$ and deep learning by being able to optimize parameters to get a desired result. Note that this method, similar to Zanin~\cite {Zanin23} creates another graph signal  with $N$ nodes (or an $N$ length signal in the uni-variate case). This can be considered a convolution filtering operation. Thus, there is now a way to learn the optimal `filter' by exploring an infinite search space of monotonically increasing functions. In terms of further applications to machine learning or graph neural networks~\cite{gnn}, this is promising.

We showed that $CPE_G$ improves upon $PE_G$ by being able to detect sharper changes in activity and more islands of stability and considers the amplitude and overall structure in the window. We showed in real-life and synthetic data examples how $CPE_G$ can detect spikes in activity where Discrete $PE_G$ and its contrasts fail. While we only demonstrated this in the case of $m=3$ for simplicity, this is easily extended to larger embedding vectors by considering the other possible permutation patterns exactly like the discrete case. 

 Given the node-level granularity of $CPE_G$ we can now optimize at a node level opening up possibilities for more precise machine learning algorithms. Using spectral surrogate data where we used the GFT to destroy non-linearity in the signals of heartbeat data of patients, it was shown that $CPE_G$ can distinguish well between the presence and non-presence of non-linearity at a participant level better than discrete $PE_G$. We noticed that the computational time in both discrete and continuous $PE_G$ were around the same with no significant differences.

 We must address a limitation of this study. $CPE_G$, as of now and similar to deep learning, remains a relatively black-box method. While we can manipulate parameters and get an intuition of what features we are trying emphasize, we cannot tell exactly what change in input is causing what change in output. While this reduces interpretibility, it does not negate its benefits as a useful tool. This is one of the more un-desirable links to deep learning we have to integrate to currently get improved performance \cite {Goodfellow16}. However, this limitation does not preclude the interpretatibility of the contrasts, which we extended here to the graph signal case, or the fact that our formulation for graph signals enable the application of permutation entropy concepts to new kinds of data.

\section{Conclusions and Future Work}

This work sets the groundwork for the efficient application of Graph Permutation Entropy by providing a thorough extension of advances in Permutation Entropy over the past few years and translating it effectively to the graph domain. While some of these methods have been explored before, they have been constrained to the uni-variate case (or image case). The extensions to the graph domain are a first in the field.

$CPE_G$, the continuous counterpart of $PE_G$, marks a significant advancement in ordinal deep learning, extending the groundwork laid by Zanin. We introduce the ordinal activation function, enabling exploration within the infinite search space of strictly monotonically increasing functions. While strictly monotonically decreasing functions could maintain global consistency with $PE_G$ as per the in-variance theorem, their use may compromise interpretability in ordinal contrasts and granular node-level analyses due to the reversal in order. This undermines $CPE_G$'s hallmark feature of being able to maintain the distinct ordinal patterns of each node from classical discrete $PE_G$ while adding extra continuous information.

While this extension is demonstrably useful, our interpretability has dropped compared to traditional $PE$ and $PE_G$. We highly encourage further work to develop methods to better interpret $CPE_G$ results or even to integrate new methods to improve interpretability. For instance, in our study of nodal $CPE_G$ in MCI, we found that interpretability could be improved by considering $CPE_G$ alongside $PE_G$. By analyzing changes in dominant pattern in $PE_G$ (across disease), and comparing them to changes in $CPE_G$, we could detangle those changes driven by pattern, or by the amplitude aware values, leading to improved interpretability. 

This improvement in interpretability and sensitivity could be of great benefit in the study of neurological diseases such as AD. Specifically, the literature presents scattered evidence suggesting that in the pre-clinical stages of the Alzheimer's Disease continuum, there are increases in recruitment and connectivity, denoting compensatory changes, followed by massive reductions as the disease progresses. CPEG could be a tool sensitive enough to detect such transitions from hyperactivity to hypoactivity.

It is obvious that the extension to the continuous case opens up applications in machine learning due its inherently optimizable parameters. We encourage future research in this area as we believe this can be fruitful. Of particular importance is the granularity of these graph methods. We have essentially assigned a continuous value to each node in the graph in $CPE_G$, leading to potential for optimization at a node level. This can be particularly useful when attempting to emphasize certain features of graph signals. While we have explored some surrogate data, we encourage the application of more surrogate data to these graph methods to test their ability to detect further dynamics of signals and systems.

We found the results in the application of graph image contrasts fascinating, particularly in the fractal surface analysis. Further theory to explain the existence of the parabolic behaviour specific to the graph case would be beneficial. Mathematical proofs and extensions of the individual methods in this paper, such as an equivalent of the universal approximation theorem will allow a comprehensive understanding of the methods and lead to improvements. We have not explored this in this paper but we strongly believe it can be highly beneficial.

It is vital to note that this work is not merely an adaptation of previous amplitude-aware extensions of Permutation Entropy, but a novel general framework that uses the projection of embedding vectors onto a continuous ordinal activation function to get more fine-grained detail on the signal's non-linear dynamics while maintaining the simplicity of the ordinal patterns in traditional PE.

 Note that $CPE_G$ is easily transferable to uni-variate time series with just the use of traditional $PE$ instead of $PE_G$.  Overall, these novel methods should accelerate the transition of ordinal analysis to the graph domain allowing it to quickly catch up to its uni-variate counterpart. This can hopefully allow for future research to simultaneously account for the graph domain when developing ordinal analysis methods.  

\section{Acknowledgements}
JE, OR, and JSFC acknowledge a Research Project Grant (RPG-2020-158) from the Leverhulme Trust. OR is supported by the Engineering and Physical Sciences Research Council (EPSRC) Student Excellence Award (SEA) Studentship provided by the United Kingdom Research and Innovation (UKRI) council. ACC acknowledges Edinburgh University's Principle's Career Development PhD Scholarship along with Federica Guazzo for her help in pre-processing the fMRI data.

\end{document}